\newcommand{\m}[1]{\mathcal{#1}}
\newcommand{\mb}[1]{\mathbb {#1}}
\newcommand{\ti}[1]{\widetilde {#1}}
\theoremstyle{plain}
\newtheorem{thm}{Theorem}[section]
\newtheorem*{thm*}{\bf Theorem }
\newtheorem{lem}[thm]{Lemma}
\newtheorem{prop}[thm]{Proposition}
\newtheorem*{prop*}{\bf Proposition}
\theoremstyle{definition}
\newtheorem{defn}[thm]{Definition}
\newtheorem{example}[thm]{Example}
\theoremstyle{remark}
\newtheorem{remark}[thm]{Remark}
\numberwithin{equation}{section}
\title{Complexes of marked graphs in gauge theory}
\author{Marko Berghoff and Andre Knispel}
\begin{document}

\begin{abstract}
We review the gauge and ghost cyle graph complexes as defined by Kreimer, Sars and van Suijlekom in ``Quantization of gauge fields, graph polynomials and graph homology'' and compute their cohomology. These complexes are generated by labelings on the edges or cycles of graphs and the differentials act by exchanging these labels. We show that both cases are instances of a more general construction of double complexes associated to graphs. Furthermore, we describe a universal model for these kind of complexes which allows to treat all of them in a unified way.
\end{abstract}

\maketitle

\section{Introduction}
\label{intro}
In \cite{ksvs} Kreimer, Sars and van Suijlekom showed how gauge theory amplitudes can be generated using only a scalar field theory with cubic interaction. On the analytic side this is achieved by means of a new graph polynomial, dubbed the \textit{corolla polynomial}, that transforms integrands of scalar graphs into gauge theory integrands. On the combinatorial side all graphs relevant in gauge theory can be generated from the set of all 3-regular graphs by means of operators that label edges and cycles. These labels represent edges with different Feynman rules that incorporate contributions from 4-valent vertices and relations between 3- and 4-valent vertices, and similar for gluon and ghost cycles. 

Generating and exchanging these labels on a fixed graph $\Gamma$ can be cast as operations that square to zero, hence define differentials on the free abelian group generated by all possible labelings of $\Gamma$. One of the main observations in \cite{ksvs} is that modeling edge-collapses and particle types by different labels on edges and cycles, called \textit{markings}, one thereby obtains two cochain complexes, called \textit{gauge} and \textit{ghost cycle complexes}, whose cohomology encodes physical constraints on scattering amplitudes in gauge theory. \textit{Very} roughly speaking, the first marking represents modified Feynman rules, such that that the full gauge theory amplitude is given by the sum over all marked, 3-regular graphs (representing all ways of expanding 4-gluon into 3-gluon vertices or all ways of exchanging gluon for ghost loops, respectively). The second marking or, more precisely, the two differentials that change the first into the second marking and generate new marked edges of the second type, reflect physical constraints such as unitarity and gauge covariance, in the sense that observable quantities must lie in the kernel of these maps (similar to the approach in BRST quantization, see for instance \cite{Ofarrill}). Thus the relevance of understanding the cohomology of these complexes. 

For a thorough discussion of the quantum field theoretical motivation and interpretation of these complexes we refer to the original article \cite{ksvs} and the review \cite{DK-corolla}. A detailed discussion of the analytic approach via corolla polynomials can be found in \cite{sars}, a general reference for background material on the qauntization of gauge theories is the classical work \cite{cvitanovic}. 

However, in the present article we are not concerned with physics, but only with the cohomology of these complexes. In \cite{ksvs} it is stated that the gauge theory amplitude is a cocycle in both complexes. The authors then discuss the physical implications of this fact. Here we study the full cohomology of these complexes and show that this amplitude is not only a cocycle, but represents a non-trivial cohomology class, in fact the only one. 

We show that the two complexes introduced in \cite{ksvs} are special cases of a general construction that associates a cochain complex to a graph and a class of subgraphs allowed to be marked. This complex is generated by all possible markings of the graph and the differentials operate on the markings by generating and exchanging them. The connection to physics comes here from the mere choice of marked substructures (i.e.\ cycles and edges) and the interpretation of the differentials.

Note that we are not dealing with ``classical'' graph complexes in the sense of \cite{ko2}. Although edge-markings may be interpreted as Feynman rules for edge-collapses, the differentials do not change the topology of graphs. Our construction is more similar to \cite{jonsson} which studies simplicial complexes associated to graphs and classes of substructures, such as cliques and independent sets. 

Our main statement is the following.

\begin{thm}\label{thm:mainthm} Fix $r,l\in \mb N$ and let $(\m G^\bullet,S+(-1)^\bullet T)$ denote the total complex where $S$ and $T$ are the gauge and ghost cycle differentials.
Define
 \begin{equation*}
  X:= \sum_{\Gamma} \sum_{m} (\Gamma,m),
 \end{equation*}
as the sum over all admissible 1-markings of edges and cycles in 3-regular graphs $\Gamma$ with $r$ legs and $l$ loops. Then $SX=TX=0$ and $X$ represents the only non-trivial cohomology class in $H^\bullet(\m G,S+(-1)^\bullet T)$.
\end{thm}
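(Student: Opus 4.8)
The plan is to prove both assertions by reducing the gauge and ghost cycle complexes to the universal model introduced above and computing its cohomology by a Künneth decomposition over the markable substructures. Since neither $S$ nor $T$ alters the topology of a graph, the differentials act only on the markings, and the universal model is designed precisely so that all the graph-wise complexes for fixed $r,l\in\mb N$ are treated in one uniform family. The first step is therefore to fix a $3$-regular $\Gamma$ with $r$ legs and $l$ loops and to isolate, for a single markable object (one edge for $S$, one cycle for $T$), the elementary complex it generates.

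Each such object carries three states — unmarked, first-marked, and second-marked — and the differential generates a second marking out of either of the first two states. With the signs dictated by the construction, the non-trivial part of this elementary complex is the two-term map $u\mapsto q$, $p\mapsto -q$, whose cohomology is one-dimensional, concentrated in the bottom degree, and generated by $u+p$, i.e.\ by the formal sum of the two admissible $1$-markings of that object. Crucially this local cohomology is free, so no Tor obstruction will appear when I pass to tensor products over $\mb Z$. I would then assemble $\m G_\Gamma^\bullet$ as the tensor product of these elementary complexes over all edges and all cycles, regarding $(S,T)$ as the two anticommuting differentials of the resulting double complex and $S+(-1)^\bullet T$ as its total differential. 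The Künneth theorem — equivalently, degeneration of the double-complex spectral sequence at the first page — then yields a one-dimensional cohomology spanned by the tensor product of the local generators $u+p$. Expanding this product multiplicatively identifies the generator with $\sum_m(\Gamma,m)$, the sum over all admissible $1$-markings of $\Gamma$.

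Transferring this computation back through the universal model identifies the global generator with $X$, and the cocycle identities $SX=TX=0$ follow immediately from the local relations $S(u+p)=0$ and $T(u+p)=0$ propagated through the tensor factorisation; this propagation is exactly the pairwise sign-cancellation between the image of an object's unmarked state and the image of its first-marked state. Non-triviality of $[X]$ is then automatic, since $X$ is the nonzero Künneth generator rather than a boundary.

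The main obstacle I anticipate is twofold. First, the admissibility constraints on markings may obstruct a naive factorisation of $\m G_\Gamma^\bullet$ into independent per-object complexes; reconciling the combinatorics of admissible markings with the Künneth decomposition — most likely by exhibiting the inadmissible configurations as an acyclic subcomplex, or by filtering so that the associated graded factorises — is the delicate point, together with the sign bookkeeping needed to certify that $S+(-1)^\bullet T$ genuinely assembles a double complex whose spectral sequence degenerates. Second, and most importantly, I must explain why summing the one-dimensional contributions over all admissible $\Gamma$ still leaves a single non-trivial class: the content of the reduction to the universal model is precisely that the would-be per-graph generators are glued into the one class represented by $X$, so the crux of the argument is showing that this reduction is a quasi-isomorphism and that $X$ maps to the unique generator.
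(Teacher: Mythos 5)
Your local analysis of a single markable object is correct (the three-state complex has cohomology $\mb Z$ concentrated in degree $0$, generated by the sum of the unmarked and the $1$-marked state), but the step you defer --- reconciling admissibility with the K\"unneth decomposition --- is not a technicality to be cleaned up afterwards; it is exactly where the proposal breaks, and it is the actual content of the theorem. The complex associated to a fixed graph $\Gamma$ is \emph{not} the tensor product of the elementary complexes over all edges and cycles: its underlying group is spanned only by \emph{admissible} markings (the expansion of the product of local generators contains terms, e.g.\ two adjacent $1$-marked edges, which are not even elements of the complex), and the differentials $s,t$ carry vanishing conditions ($s_e(\Gamma,m)=0$ whenever $e$ touches another marked object) that genuinely couple the would-be factors. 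What is true is that the admissible complex is a \emph{quotient} of the tensor-product complex by the span $\m I$ of inadmissible markings ($\m I$ is closed under the differential, since marking only enlarges or preserves the marked set, and the projection killing inadmissible markings is a chain map). So your K\"unneth argument computes the cohomology of the wrong complex unless you additionally prove that $\m I$ is acyclic --- and the natural proof of that is precisely the machinery of the paper: decompose by the underlying configuration of marked objects, observe that for each fixed nonempty configuration the part of the differential exchanging $1$- for $2$-marks is the augmented simplicial chain complex of a simplex, hence acyclic, and then run the spectral sequence of the double complex to transfer this to the full differential. The paper does this once and for all in the vertex-marking model (Theorem \ref{theorem:universal}, Proposition \ref{proposition:homsigma}, Theorem \ref{theorem:HomDC}), where the configurations are the independent sets of the auxiliary graph $\Gamma'$; there is no shortcut around this step via K\"unneth, because the coupling is real.

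Your second anticipated obstacle rests on a misreading of the statement. The differentials never relate different graphs, so the total complex is a direct sum over $\Gamma\in\mathrm{Gra}_{r,l}$ and its cohomology is $\bigoplus_{\Gamma}\mb Z$ in degree $0$ and zero in positive degrees --- one class per graph, not literally one class. Nothing ``glues'' the per-graph generators, and there is no quasi-isomorphism across graphs to be proven; $X$ is simply the sum over $\Gamma$ of the per-graph generators $e^{\delta_+}e^{\chi_+}(\Gamma,m_0)$, and this is what the theorem means by calling it \emph{the} non-trivial (maximal) class. Trying to establish a literal rank-one statement would fail. Relatedly, your derivation of $SX=TX=0$ by ``propagating the local relation through the tensor factorisation'' inherits the same defect: in the actual complex the cancellation pairs $s_e(\Gamma,m)$ against $\sigma_e(\Gamma,m_{|e\mapsto 1})$ only for those $e$ admitting an admissible $1$-marking, and verifying the signs there is the computation the paper quotes from \cite{ksvs} in Proposition \ref{prop:cocycles}.
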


This result is based on two properties of the gauge and ghost cycle complexes. Firstly, there is a universal model for general complexes of marked graphs allowing to treat both cases at once. Secondly, its differentials are of the form $D=\delta + d$ with $\delta$ very simple. This allows to compute the cohomology of $D$ by a spectral sequence argument without the need of explicitly understanding the cohomology with respect to $d$.\footnote{The differential $d$ is quite interesting in its own right as its computation is related to NP-hard problems in graph theory, cf.\ \cite{knispel}.} Both properties are established in theorems \ref{theorem:universal} and \ref{theorem:HomDC} below.

The exposition is organized as follows. In the next section we introduce markings and complexes of marked graphs in general, then we specialize to edge-, cycle- or vertex marked graphs, the latter serving as our ``computational model'' to study the former two cases. We compute its cohomology in Section \ref{section:ggcohom} in two steps. First we study only the cohomology with respect to the simple differential $\delta$, then we apply the result in a spectral sequence associated to the double complex formed by $\delta+ d$. In Section \ref{section:gaugegraphcomplex} we combine the gauge and ghost cycle complexes into a large double complex and show that its cohomology is generated by a single element, the full gauge theory amplitude\footnote{For the experts: Symmetry factors can of course be included (as shown in \cite{ksvs}), but they do not play a role for the cohomology of the complex.}, thereby proving the main theorem. 
\newline

\textbf{Acknowledgements:}
We are very grateful to Henry Kißler and Dirk Kreimer for many valuable discussions and explanations, especially about the physics behind this work.

\section{Complexes of marked graphs}\label{section:gaugegc}

We introduce some notation, then describe complexes of marked graphs of which the gauge and ghost cycle complexes in \cite{ksvs} emerge as special cases. We call them \textit{edge-} and \textit{cycle-marking complexes}. After discussing these two in detail we introduce the case of vertex-markings which serves as a universal model to study these kind of complexes.

\subsection{General markings}\label{subsection:markedcomplex}

Let $\Gamma=(\Gamma^0,\Gamma^1)$ be a connected graph. We call edges connected to univalent vertices \textit{external (edges)} or \textit{legs}, all other edges are referred to as \textit{internal} or, by abuse of language, simply as \textit{edges}. Thus, the set $\Gamma^1$ of edges of $\Gamma$ splits into $\Gamma^1= \Gamma^1_{ext} \sqcup \Gamma^1_{int}$.

A similar decomposition holds for the set of vertices of $\Gamma$, $\Gamma^0= \Gamma^0_{ext} \sqcup \Gamma^0_{int}$.

Since our operations will focus solely on the internal structure of graphs, we write $V=V(\Gamma):=\Gamma^0_{int}$ and $E=E(\Gamma):=\Gamma^1_{int}$ for its internal vertices and edges. In that spirit we denote graphs by $\Gamma=(V,E)$, as is customary in graph theory, tacitly remembering the external structure encoded by the pair $\left(\Gamma^0_{ext},\Gamma^1_{ext} \right) $.

A \textit{subgraph} of $\Gamma$ is a pair of subsets $V'\subset V$ and $ E' \subset E$ such that $(V',E')$ is a graph itself. Note that our definition of subgraphs relies only on the internal structure of $\Gamma$. However, we allow subgraphs to have univalent vertices. In particular, (internal) vertices and edges of $\Gamma$ are (identified with their corresponding) subgraphs of $\Gamma$.

\begin{defn}\label{definition:Smarking}
Fix a finite set $S = \{ 0, \ldots, s \} \subset \mb N$. For a finite graph $\Gamma$ let $\mathrm{Sub}(\Gamma)$ denote the set of all subgraphs of $\Gamma$. Given a subset $P\subset \mathrm{Sub}(\Gamma)$ a $P$-\textit{marking} of $\Gamma$ (\textit{in $S$}) is a map $m: P \to S$. We call the pair $(\Gamma,m)$ a \textit{marked graph} and think of the subgraphs in $m^{-1}(0)$ as being not marked. Likewise, for $i \in \{1,\ldots,s\}$ we refer to the elements of $m^{-1}(i)$ as $i$-marked. A marking is \textit{admissible} if no two marked elements share a common vertex and, in the case that vertices are marked, no two marked vertices are connected by an edge.
\end{defn}

In the following we will describe (co-)chain complexes of marked graphs. For this we consider only admissible markings with $s=2$, but more general settings are obviously possible. Throughout this work all coefficients will be in $\mb Z$.

\begin{defn}\label{definition:markedcomplex}
Fix a graph $\Gamma$ and $P\subset \mathrm{Sub}(\Gamma)$ a set of subgraphs of $\Gamma$ endowed with a total order. Let $\m P(\Gamma)$ denote the free abelian group generated by all markings $(\Gamma,m)$ where $m:P \to \{0,1,2\}$ is admissible. 

The marking induces a partition of $P$. We write $P=P_0 \sqcup P_m$ where $P_0$ denotes the unmarked objects in $P$ and $P_m=P_1 \sqcup P_2$ the 1- and 2-marked ones. 
\end{defn}

The group $\m P(\Gamma)$ carries two gradings. With $\m P(\Gamma)_i^j$ denoting the subgroup of $\m P(\Gamma)$ generated by markings with $|P_1|=i$ and $|P_2|=j$ we set
\begin{equation*}
 \m P(\Gamma)^j:= \bigoplus_{i \in \mb N}\m P(\Gamma)_i^j.
\end{equation*}

We define two differentials on this group by changing and permuting the markings.

\begin{defn}\label{definition:d1d2}
 For $(\Gamma,m) \in \m P(\Gamma)$ let $(\Gamma,m_{|p\mapsto 2})$ denote the marking that is identical to $m$ on $P\setminus p$ and marks $p$ by 2.
 Define linear maps $\delta,d: \m P(\Gamma)^j \longrightarrow \m P(\Gamma)^{j+1}$ by  
 
\begin{align*}
  \delta(\Gamma,m)&:= (-1)^{| P_m |}  \sum_{p \in P_1 } (-1)^{|\{ p' \in P_1  \mid p' > p\}|} \delta^p(\Gamma,m), \\
  d(\Gamma,m)&:= \sum_{p \in P_0 } (-1)^{|\{ p' \in P_m \mid p' < p\}|} d^p(\Gamma,m),
 \end{align*}
 where $\delta^p(\Gamma,m):=(\Gamma,m_{|p\mapsto 2})$ and
\begin{equation*}
 d^p(\Gamma,m):= \begin{cases}
           0 & \text{if $p$ shares a vertex with some $p'\in P$} \\
           (\Gamma,m_{|p\mapsto 2}) & \text{else.}
          \end{cases}
\end{equation*}
\end{defn}

\begin{remark}
 Obviously, definitions \ref{definition:markedcomplex} and \ref{definition:d1d2} depend on the chosen order on $P$. We omit this piece of data in the following, because the cohomology of these complexes will turn out to be independent of it. This can be shown explicitly, but follows also a posteriori from Proposition \ref{prop:cocycles}. See also Remark \ref{remark:edgeorder}.
\end{remark}

\begin{prop}\label{proposition:Ddifferential}
 Both maps $\delta$ and $d$ square to zero. Moreover, $\delta d+ d \delta=0$, so that $D:=\delta + d$ is a differential on $\m P(\Gamma)$.
\end{prop}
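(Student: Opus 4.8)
The plan is to verify the three identities $\delta^2=0$, $d^2=0$ and $\delta d+d\delta=0$ separately, after which $D^2=\delta^2+(\delta d+d\delta)+d^2=0$ is immediate. In all three cases the underlying combinatorial mechanism is identical: applying two of the elementary operators $\delta^p,d^q$ in succession selects an ordered pair of subgraphs and relabels both by $2$, and the resulting marked graph depends only on the unordered pair. Thus every term in a double application is matched by the term from the opposite order of selection, and the content of each identity is that these two terms carry opposite signs (or vanish simultaneously). The work is therefore a sign bookkeeping exercise governed by how the two statistics $|\{p'\in P_1\mid p'>p\}|$ and $|\{q'\in P_m\mid q'<q\}|$, together with the prefactor $(-1)^{|P_m|}$, change when a single element is relabeled.

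For $\delta^2$ I would fix $p,q\in P_1$ with $p<q$ and compare the term $\delta^q\circ\delta^p$ (select $p$ first) with $\delta^p\circ\delta^q$. Relabeling a $1$ to a $2$ leaves $|P_m|$ unchanged, so the two prefactors $(-1)^{|P_m|}$ contribute trivially; since $q>p$, removing $p$ does not affect the count of $P_1$-elements above $q$, whereas removing $q$ drops the count above $p$ by exactly one. Hence the two exponents differ by one and the terms cancel, and admissibility is never in question because $\delta^p$ changes only the label of $p$, not the set of marked subgraphs. The argument for $d^2$ is the mirror image using the lower statistic: for $p,q\in P_0$ with $p<q$, marking $p$ raises $|\{q'\in P_m\mid q'<q\}|$ by one while marking $q$ leaves $|\{p'\in P_m\mid p'<p\}|$ untouched, again yielding opposite signs. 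The new feature is that $d^q$ may vanish by inadmissibility; I would note that if $p$ and $q$ share a vertex then both $d^q\circ d^p$ and $d^p\circ d^q$ vanish, and that the admissibility of $q$ against the pre-existing marked set $P_m$ is the same in both orders, so the vanishing is consistent.

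The heart of the proof is the anticommutation $\delta d+d\delta=0$, and this is where I expect the only real subtlety. Here the two operators act on disjoint label classes, $\delta$ on a $p\in P_1$ and $d$ on a $q\in P_0$, so there is no interference in the choice of elements, and both statistics are literally unchanged between the two orders: relabeling $p$ from $1$ to $2$ does not alter the marked set seen by $q$, and $q$ never lies in $P_1$. What does change is the prefactor of $\delta$: since $d$ enlarges $P_m$ by one element whereas $\delta$ preserves $|P_m|$, the factor $(-1)^{|P_m|}$ carried by $\delta$ is evaluated on a set of size $|P_m|$ in $d\delta$ but of size $|P_m|+1$ in $\delta d$. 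This single unit difference is exactly the sign flip that makes the two compositions cancel. I would present this as an explicit comparison of the total signs $(-1)^{|P_m|+c+e}$ for $d\delta$ and $(-1)^{|P_m|+c+e+1}$ for $\delta d$, with $c=|\{p'\in P_1\mid p'>p\}|$ and $e=|\{q'\in P_m\mid q'<q\}|$, and close by observing that $d^q$ is admissible after $\delta^p$ if and only if it is admissible at the outset, so the cancellation persists term by term, including the zero terms.
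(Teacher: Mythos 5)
Your proposal is correct and follows essentially the same route as the paper: both arguments reduce each identity to a pairwise cancellation of the two orderings of $\delta^p$, $d^q$, with the signs tracked through the statistics $|\{p'\in P_1\mid p'>p\}|$ and $|\{q'\in P_m\mid q'<q\}|$, and with the anticommutator $\delta d + d\delta=0$ coming from exactly the mechanism you identify, namely that $d$ enlarges $P_m$ by one element so the prefactor $(-1)^{|P_m|}$ of $\delta$ flips sign between the two compositions. The paper merely phrases the same cancellation as a reindexed double sum split into the cases $p<q$ and $q<p$, rather than as a term-by-term matching of unordered pairs.
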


\begin{proof}
This is shown in \cite{ksvs}, propositions 4.14 and 4.17, for the case $P=E(\Gamma)$. The same proof works for general $P$, since due to the notion of admissible markings $\delta$ and $d$ cannot ``mix'' elements of $P$. Therefore,
\begin{align*}
  \delta \delta(\Gamma,m)&=  (-1)^{|P_m|} \sum_{p \in P_1 }(-1)^{|\{ p' \in P_1 \mid p' > p\}|}\delta \delta^p(\Gamma,m) \\
  &=  \sum_{p \in P_1 }(-1)^{|\{ p' \in P_1 \mid p' > p\}|} \sum_{q \in P_1 \setminus \{p\} }(-1)^{|\{ q' \in P_1 \setminus \{p\}  \mid q' > q\}|} \delta^q \delta^p (\Gamma,m) \\
  &=  \sum_{p \in P_1 }\sum_{q \in P_1 \setminus \{p\} }(-1)^{ |\{ p' \in P_1 \mid p' > p\}| +|\{ q' \in P_1 \setminus \{p\}  \mid q' > q\}|} \delta^q \delta^p (\Gamma,m) \\
  & = \sum_{p,q \in P_1, p<q }(-1)^{|\{ r \in P_m \mid p< r < q\}|} \delta^q \delta^p (\Gamma,m) \\
  & \quad +  \sum_{p,q \in P_1,q<p }(-1)^{|\{ r \in P_m \mid q< r < p\}| -1} \delta^q \delta^p (\Gamma,m) = 0.
 \end{align*}
  Similarly,
 \begin{align*}
  d d(\Gamma,m)&= \sum_{p \in P_0 }(-1)^{|\{ p' \in P_m \mid p' < p\}|} dd^p(\Gamma,m) \\
  &=  \sum_{p \in P_0 }(-1)^{|\{ p' \in P_m \mid p' < p\}|} \sum_{q \in P_0\setminus \{p\} }(-1)^{|\{ q' \in P_m \cup \{p\} \mid q' < q\}|} d^q d^p (\Gamma,m) \\
  &=  \sum_{p \in P_0 }\sum_{q \in P_0 \setminus \{p\}}(-1)^{|\{ p' \in P_m \mid p' < p\}| + |\{ q' \in P_m \cup \{p\} \mid q' < q\}|} d^q d^p (\Gamma,m) \\
  & = \sum_{p,q \in P_0, p<q }(-1)^{1+|\{ r \in P_m \mid p< r < q\}|} d^q d^p (\Gamma,m) \\
  & \quad +  \sum_{p,q \in P_0,q<p }(-1)^{|\{ r \in P_m \mid q< r < p\}| } d^q d^p (\Gamma,m) = 0.
 \end{align*}
Finally,
\begin{align*}
  d \delta(\Gamma,m)&=  (-1)^{|P_m|} \sum_{p \in P_1 }(-1)^{|\{ p' \in P_1 \mid p' > p\}|}d \delta^p(\Gamma,m) \\
  &= (-1)^{| P_m |} \sum_{p \in P_1, q \in P_0 }(-1)^{|\{ p' \in P_1 \mid p' > p\}|+|\{ q' \in P_m  \mid q' < q\}|} d^q \delta^p (\Gamma,m), \\
  \delta d (\Gamma,m) &= \sum_{p \in P_0 } (-1)^{|\{ p' \in P_m \mid p' < p\}|} \delta d^p(\Gamma,m) \\
  & = (-1)^{| P_m |+1} \sum_{p \in P_0,q \in P_1 } (-1)^{|\{ p' \in P_m \mid p' < p\}| + |\{ q' \in P_1  \mid q' > q\}|}
  \delta^q d^p(\Gamma,m) \\
  & = - d \delta(\Gamma,m).
 \end{align*}
\end{proof}

In summary, given a graph $\Gamma$ and any ``type'' of subgraphs $P \subset \mathrm{Sub}(\Gamma)$ that can be marked we get a cochain complex $(\m P(\Gamma),D)$. In the following we specialize this construction to three cases where $P$ consists of edges, cycles (the two cases considered in \cite{ksvs}) or vertices (our universal model).

\subsection{The edge-marking complex}\label{subsection:ggc}
We start with the case of edge-markings, i.e.\ we apply the construction outlined above to $P=E=E(\Gamma)$.
\begin{defn}\label{definition:Gra}
Let $\mathrm{Gra}_{r,l}$ denote the set of all isomorphism classes of finite connected graphs $\Gamma=(\Gamma^0,\Gamma^1)\approx (V,E)$ with the following properties
\begin{itemize}
 \item $\Gamma$ has first Betti number is equal to $l$, i.e.\ has $l$ independent cycles.
 \item $\Gamma$ has $r$ legs, i.e.\ $r$ vertices of valence one, all other vertices have valence equal to three.
 \item $\Gamma$ has no self-loops (edges connecting a single vertex to itself).
\end{itemize}
\end{defn}
\begin{defn}\label{definition:Ecomplex}
Fix $r,l \in \mb N$. For $\Gamma \in \mathrm{Gra}_{r,l}$ let $\m E(\Gamma)$ denote the free abelian group generated by all admissible markings $m:E \to \{0,1,2\}$ of the (ordered) edges of $\Gamma$. 

A marking induces a partition of the edge set. We write $E=E_0 \sqcup E_m$ where $E_0$ denotes the unmarked edges and $E_m=E_1 \sqcup E_2$ the 1- and 2-marked ones. 
\end{defn}

 \begin{remark}\label{remark:edgeorder}
  As mentioned in the previous section, all choices of orders on the marked elements produce isomorphic complexes. Nevertheless, a priori one needs to take care in regards to graph automorphisms and how they change a chosen order. In the usual definition of differentials on graph complexes graphs are oriented by an order on their edge set.\footnote{This is not the only choice; cf.\ \cite{convogt} for a thorough discussion of this matter.} Two orientations $o$ and $o'$ on a graph $\Gamma$ are related by
  \begin{equation*}
  (\Gamma,o)= \mathrm{sgn}(\varphi) \cdot (\Gamma,o'),
 \end{equation*}
  where $\varphi$ is the permutation on $E$ induced by the change of order. As a consequence, we have $(\Gamma,o)=-(\Gamma,o)=0$ for any graph that has an automorphisms inducing an odd permutation of its edge set. In particular, graphs with multi-edges vanish.
To keep all graphs in the game, we take the order as an additional, separately chosen piece of information, tacitly equipping every isomorphism class of graphs with a choice. Ultimately, this works because our differentials do not relate different graphs but operate on the marking only.
 \end{remark}

The group $\m E(\Gamma)$ carries two gradings. With $\m E(\Gamma)_i^j$ denoting the subgroup of $\m E(\Gamma)$ generated by marked graphs with $i$ edges of type 1 and $j$ edges of type 2 we set
\begin{equation*}
 \m E(\Gamma)^j:= \bigoplus_{i \in \mb N} \m E(\Gamma)_i^j.
\end{equation*}
Definition \ref{definition:d1d2} and Proposition \ref{proposition:Ddifferential} produce three differentials $\sigma,s, S: \m E(\Gamma)^j \longrightarrow \m E(\Gamma)^{j+1}$, given by
 \begin{align*}
  \sigma(\Gamma,m) & = (-1)^{| E_m |}  \sum_{e \in E_1 } (-1)^{|\{ e' \in E_1  \mid e' > e\}|} \sigma_e(\Gamma,m) ,\\
  s(\Gamma,m) & = \sum_{e \in E_0 } (-1)^{|\{ e' \in E_m \mid e' < e\}|} s_e(\Gamma,m),\\
 S & =s + \sigma,
 \end{align*}
 where $\sigma_e(\Gamma,m)= (\Gamma,m_{|e\mapsto 2})$ and 
 \begin{equation*}
 s_e(\Gamma,m)= \begin{cases}
           0 & \text{ if $e$ is adjacent to another marked edge} \\
           (\Gamma,m_{|e\mapsto 2}) & \text{ else.}
          \end{cases}    
 \end{equation*}        
          
\begin{example}\label{example:edgemarking}
Let $(\Gamma,m)=$
\raisebox{-.33cm}{\resizebox{2cm}{1cm}{\begin{tikzpicture}[scale=1]
 \draw []  (-.3,0) to (0,0);
\draw [out=90, in=90, looseness=1.3]  (0,0) to node[above, xshift=-.06cm] {$1$} node[xshift=.06cm] {$\boldsymbol{|}$} (1,0);
 \draw [out=-90, in=-90, , looseness=1.3]  (0,0) to node[below] {$2$} (1,0);
 \draw []  (1,0) to node[above] {$3$} (1.5,0);
 \draw [out=90, in=90, looseness=1.3]  (1.5,0) to node[above] {$4$} (2.5,0);
 \draw [out=-90, in=-90, looseness=1.3]  (1.5,0) to node[below] {$5$} (2.5,0);
 \draw []  (2.5,0) to (2.8,0);
 \end{tikzpicture}}}
with $E$ ordered as pictured, the 1- and 2-markings denoted by ``$|$'' and ``$||$'', respectively. Then

\begin{align*}
  s(\Gamma,m) & =- \raisebox{-.3cm}{\resizebox{2cm}{.8cm}{\begin{tikzpicture}[scale=1]
 \draw []  (-.3,0) to (0,0);
\draw [out=90, in=90, looseness=1.3]  (0,0) to node {\textbf{$\boldsymbol{|}$}} (1,0);
 \draw [out=-90, in=-90, looseness=1.3]  (0,0) to node[below] {} (1,0);
 \draw []  (1,0) to node[above] {} (1.5,0);
 \draw [out=90, in=90, looseness=1.3]  (1.5,0) to node[] {\textbf{$\boldsymbol{||}$}} (2.5,0);
 \draw [, out=-90, in=-90, looseness=1.3]  (1.5,0) to node[] {} (2.5,0);
 \draw []  (2.5,0) to (2.8,0);
 \end{tikzpicture}}}
-
\raisebox{-.3cm}{\resizebox{2cm}{.8cm}{\begin{tikzpicture}[scale=1]
 \draw []  (-.3,0) to (0,0);
\draw [out=90, in=90, looseness=1.3]  (0,0) to node {\textbf{$\boldsymbol{|}$}} (1,0);
 \draw [out=-90, in=-90, looseness=1.3]  (0,0) to node[below] {} (1,0);
 \draw []  (1,0) to node[above] {} (1.5,0);
 \draw [out=90, in=90, looseness=1.3]  (1.5,0) to node[] {} (2.5,0);
 \draw [, out=-90, in=-90, looseness=1.3]  (1.5,0) to node[] {\textbf{$\boldsymbol{||}$}} (2.5,0);
 \draw []  (2.5,0) to (2.8,0);
 \end{tikzpicture}}}
, \quad
\sigma(\Gamma,m) = - 
\raisebox{-.3cm}{\resizebox{2cm}{.8cm}{\begin{tikzpicture}[scale=1]
 \draw []  (-.3,0) to (0,0);
\draw [out=90, in=90, looseness=1.3]  (0,0) to node {\textbf{$\boldsymbol{||}$}} (1,0);
 \draw [out=-90, in=-90, looseness=1.3]  (0,0) to node[below] {} (1,0);
 \draw []  (1,0) to node[above] {} (1.5,0);
 \draw [out=90, in=90, looseness=1.3]  (1.5,0) to node[above] {} (2.5,0);
 \draw [, out=-90, in=-90, looseness=1.3]  (1.5,0) to node[below] {} (2.5,0);
 \draw []  (2.5,0) to (2.8,0);
 \end{tikzpicture}}}, \\
\sigma s (\Gamma,m) & = -
\raisebox{-.3cm}{\resizebox{2cm}{.8cm}{\begin{tikzpicture}[scale=1]
 \draw []  (-.3,0) to (0,0);
\draw [out=90, in=90, looseness=1.3]  (0,0) to node {\textbf{$\boldsymbol{||}$}} (1,0);
 \draw [out=-90, in=-90, looseness=1.3]  (0,0) to node[below] {} (1,0);
 \draw []  (1,0) to node[above] {} (1.5,0);
 \draw [out=90, in=90, looseness=1.3]  (1.5,0) to node[] {\textbf{$\boldsymbol{||}$}} (2.5,0);
 \draw [, out=-90, in=-90, looseness=1.3]  (1.5,0) to node[below] {} (2.5,0);
 \draw []  (2.5,0) to (2.8,0);
 \end{tikzpicture}}} 
- \raisebox{-.3cm}{\resizebox{2cm}{.8cm}{\begin{tikzpicture}[scale=1]
 \draw []  (-.3,0) to (0,0);
\draw [out=90, in=90, looseness=1.3]  (0,0) to node {\textbf{$\boldsymbol{||}$}} (1,0);
 \draw [out=-90, in=-90, looseness=1.3]  (0,0) to node[below] {} (1,0);
 \draw []  (1,0) to node[above] {} (1.5,0);
 \draw [out=90, in=90, looseness=1.3]  (1.5,0) to node[] {} (2.5,0);
 \draw [, out=-90, in=-90, looseness=1.3]  (1.5,0) to node[] {\textbf{$\boldsymbol{||}$}} (2.5,0);
 \draw []  (2.5,0) to (2.8,0);
 \end{tikzpicture}}} 
 = - s \sigma (\Gamma,m).
\end{align*}
\end{example}
 
Let $\m E:= \bigoplus_{\Gamma \in \mathrm{Gra}_{r,l}} \m E(\Gamma)$. This group naturally inherits a grading and a differential $S=s+ \sigma$ from each of its summands, hence defines a cochain complex. 

\begin{defn}\label{definition:gluongraphcomplex}
 The complex $(\m E,S)$ is called the \textit{edge-marking complex}.
\end{defn}

\subsection{The cycle-marking complex}\label{subsection:gcgc}
Now we mark cycles instead of edges. 

\begin{defn}\label{definition:cycle}
Let $\Gamma$ be a connected graph. A \textit{cycle} $c$ in $\Gamma$ is a closed path without repeated vertices, i.e.\ a subgraph $c \subset \Gamma$ such that 
\begin{itemize}
 \item every $v\in V$ is incident to none or exactly two elements in $c$.
 \item $c \subset \Gamma$ is connected.
\end{itemize}
We denote by $C=C(\Gamma)$ the set of all cycles in $\Gamma$.
\end{defn}

\begin{defn}\label{definition:Ccomplex}
For fixed $r,l \in \mb N$ and $\Gamma \in \mathrm{Gra}_{r,l}$ let $\m C(\Gamma)$ denote the free abelian group generated by all admissible markings $m:C \to \{0,1,2\}$ of the (ordered set of) cycles of $\Gamma$. 

Analogous to the case of edge-markings, every cycle-marking induces a partition of the cycle set, $C=C_0 \sqcup C_m$ and $C_m=C_1 \sqcup C_2$.
\end{defn}

The group $\m C(\Gamma)$ is bigraded by the number of 1- and 2-markings. Let $\m C(\Gamma)_i^j$ denote the subgroup generated by marked graphs with $i$ cycles of type 1 and $j$ cycles of type 2 and let
\begin{equation*}
 \m C(\Gamma)^j:= \bigoplus_{i \in \mb N} \m C(\Gamma)_i^j.
\end{equation*}

The three differentials $\delta,d,D$ translate to $\tau,t, T: \m C(\Gamma)^j \longrightarrow \m C(\Gamma)^{j+1}$:
 \begin{align*}
  \tau(\Gamma,m)&= (-1)^{| C_m |}  \sum_{c \in C_1 } (-1)^{|\{ c' \in C_1  \mid c' > c\}|} \tau_c(\Gamma,m),\\
  t(\Gamma,m)&= \sum_{c \in C_0 } (-1)^{|\{ c' \in C_m \mid c' < c\}|} t_c(\Gamma,m), \\
  T & =t + \tau,
 \end{align*}
 where $\tau_c(\Gamma,m)= (\Gamma,m_{|c\mapsto 2})$ and 
 \begin{equation*}
  t_c(\Gamma,m)= \begin{cases}
           0 & \text{ if $c$ shares a vertex with another marked cycle} \\
           (\Gamma,m_{|c\mapsto 2}) & \text{ else.}
          \end{cases}
 \end{equation*}

\begin{example}\label{example:cyclemarking}
Let $(\Gamma,m)=$
\raisebox{-.23cm}{\resizebox{2cm}{.8cm}{\begin{tikzpicture}[scale=1]
 \draw []  (-.3,0) to (0,0);
\draw [out=90, in=90, looseness=1.3]  (0,0) to node[] {} node[] {} (1,0);
 \draw [out=-90, in=-90, , looseness=1.3]  (0,0) to node[above, yshift=.17cm] {$1$} node[] {}  (1,0);
 \draw []  (1,0) to node[above] {} (1.5,0);
 \draw [dotted, out=90, in=90, looseness=1.3]  (1.5,0) to node[] {} (2.5,0);
 \draw [dotted, out=-90, in=-90, looseness=1.3]  (1.5,0) to node[above,yshift=.17cm] {$2$} (2.5,0);
 \draw []  (2.5,0) to (2.8,0);
 \end{tikzpicture}}}
with $C$ ordered as pictured, the 1- and 2-markings drawn as dotted and dashed cycles, respectively. Then
\begin{align*}
  t(\Gamma,m) & =
\raisebox{-.3cm}{\resizebox{2cm}{.8cm}{\begin{tikzpicture}[scale=1]
 \draw []  (-.3,0) to (0,0);
\draw [dashed, out=90, in=90, looseness=1.3]  (0,0) to node[above, xshift=-.05cm] {} (1,0);
 \draw [dashed, out=-90, in=-90, , looseness=1.3]  (0,0) to node[below] {} (1,0);
 \draw []  (1,0) to node[above] {} (1.5,0);
 \draw [dotted, out=90, in=90, looseness=1.3]  (1.5,0) to node[above] {} (2.5,0);
 \draw [dotted, out=-90, in=-90, looseness=1.3]  (1.5,0) to node[below] {} (2.5,0);
 \draw []  (2.5,0) to (2.8,0);
 \end{tikzpicture}}}
, \quad
\tau(\Gamma,m) = - 
\raisebox{-.3cm}{\resizebox{2cm}{.8cm}{\begin{tikzpicture}[scale=1]
 \draw []  (-.3,0) to (0,0);
\draw [out=90, in=90, looseness=1.3]  (0,0) to node[above, xshift=-.05cm] {} (1,0);
 \draw [out=-90, in=-90, , looseness=1.3]  (0,0) to node[below] {} (1,0);
 \draw []  (1,0) to node[above] {} (1.5,0);
 \draw [dashed, out=90, in=90, looseness=1.3]  (1.5,0) to node[above] {} (2.5,0);
 \draw [dashed, out=-90, in=-90, looseness=1.3]  (1.5,0) to node[below] {} (2.5,0);
 \draw []  (2.5,0) to (2.8,0);
 \end{tikzpicture}}}, \\
\tau t (\Gamma,m) & = 
\raisebox{-.3cm}{\resizebox{2cm}{.8cm}{\begin{tikzpicture}[scale=1]
 \draw []  (-.3,0) to (0,0);
\draw [dashed, out=90, in=90, looseness=1.3]  (0,0) to node[above, xshift=-.05cm] {} (1,0);
 \draw [dashed, out=-90, in=-90, , looseness=1.3]  (0,0) to node[below] {} (1,0);
 \draw []  (1,0) to node[above] {} (1.5,0);
 \draw [dashed, out=90, in=90, looseness=1.3]  (1.5,0) to node[above] {} (2.5,0);
 \draw [dashed, out=-90, in=-90, looseness=1.3]  (1.5,0) to node[below] {} (2.5,0);
 \draw []  (2.5,0) to (2.8,0);
 \end{tikzpicture}}} = - t \tau(\Gamma,m).
\end{align*}
\end{example}

\begin{defn}\label{definition:ghostcyclecomplex}
Let $\m C:= \bigoplus_{\Gamma \in \mathrm{Gra}_{r,l}} \m C(\Gamma)$, graded by the number of 1- and 2-marked cycles, and equipped with the differential $T=t+ \tau$. The complex $(\m C,T)$ is called the \textit{cycle-marking complex}.
\end{defn}

\subsection{Marking vertices}\label{subsection:markingvertices}
Note that all of the previously defined differentials do not alter the topology of graphs, they only change their markings. Furthermore, $\sigma$ and $\tau$ act only on 1-marked edges or cycles, respectively, of a graph $\Gamma$, hence are completely independent of its topology. On the other hand, $s$ and $t$ generate new 2-markings, so they depend on the incidence structure and the marking of $\Gamma$ in a non-trivial way. Nevertheless, it is possible to construct a universal model for all cases of (admissible) markings; in this section we show that every complex $\m P(\Gamma)$ can be modeled by marking the vertices of an associated graph $\Gamma'$. 

\begin{defn}\label{definition:Vcomplex}
Given a (not necessarily connected or 3-regular) graph $\Gamma$ (without external legs and self-loops) let $\m V(\Gamma)$ denote the free abelian group generated by all markings $(\Gamma,m)$ where $m:V \to \{0,1,2\}$ marks the (ordered set of) vertices of $\Gamma$ such that no two marked vertices are connected by an edge.

We write $V=V_0 \sqcup V_m$ with $V_m=V_1 \sqcup V_2$ for the partition of $V$ induced by a marking.
\end{defn}

Let $\m V(\Gamma)_i^j$ denote the subgroup generated by marked graphs with $i$ vertices of type 1 and $j$ vertices of type 2 and let
\begin{equation*}
 \m V(\Gamma)^j:= \bigoplus_{i \in \mb N} \m V(\Gamma)_i^j.
\end{equation*}

Mimicking the previous constructions (only the notion of admissible markings has changed) we obtain three differentials $\mu,u, U: \m V(\Gamma)^j \longrightarrow \m V(\Gamma)^{j+1}$, given by
 \begin{align*}
 \mu(\Gamma,m)&:= (-1)^{| V_m |}  \sum_{v \in V_1 } (-1)^{|\{ v' \in V_1  \mid v' > v\}|} \mu_v(\Gamma,m), \\
  u(\Gamma,m)&:= \sum_{v \in V_0 } (-1)^{|\{ v' \in V_m \mid v' < v\}|} u_v(\Gamma,m),\\ 
  U & :=u + \mu,
 \end{align*}
 where $\mu_v(\Gamma,m):= (\Gamma,m_{|v\mapsto 2})$ and 
 \begin{equation*}
  u_v(\Gamma,m):= \begin{cases}
           0 & \text{ if $v$ is adjacent to another marked vertex} \\
           (\Gamma,m_{|v\mapsto 2}) & \text{ else.}
          \end{cases}
 \end{equation*}
          
\begin{example}\label{example:vertexmarking}
Let $(\Gamma,m)=$
\raisebox{-.4cm}{\resizebox{1.4cm}{.9cm}{
\begin{tikzpicture}[scale=0.666]
  \coordinate  (v1) at (-1,1); 
   \coordinate  (v2) at (-1,-1);
   \coordinate (v3) at (0,0); 
   \coordinate  (v4) at (1,1); 
   \coordinate  (v5) at (1,-1);
  \fill[] (v1) circle (.066cm) node[xshift=-.3cm] {1};
   \fill[] (v2) circle (.066cm) node[xshift=-.3cm] {2};
\fill[] (v3) circle (.066cm) node[yshift=.3cm] {3};
\fill[] (v4) circle (.066cm) node[xshift=.3cm] {4};
\path node at (v4) [shape=circle,draw] {};
\fill[] (v5) circle (.066cm) node[xshift=.3cm] {5};
   \draw (v1) -- (v2) ;
   \draw (v1) -- (v3) ;
   \draw (v2) -- (v3) ;
   \draw (v3) -- (v4) ;
   \draw (v3) -- (v5) ;
   \end{tikzpicture}}}
with $V$ ordered as pictured, the 1- and 2-markings drawn as ``\mbox{\large$\circ$}'' and ``\mbox{\tiny $\square $}'', respectively. Then

\begin{align*}
 u(\Gamma,m) & = 
 \raisebox{-.33cm}{\resizebox{1.3cm}{.8cm}{
\begin{tikzpicture}[scale=0.666]
  \coordinate  (v1) at (-1,1); 
   \coordinate  (v2) at (-1,-1);
   \coordinate (v3) at (0,0); 
   \coordinate  (v4) at (1,1); 
   \coordinate  (v5) at (1,-1);   
  \fill[] (v1) circle (.066cm) node[xshift=-.3cm] {};
   \fill[] (v2) circle (.066cm) node[xshift=-.3cm] {};
\fill[] (v3) circle (.066cm) node[yshift=.3cm] {};
\fill[] (v4) circle (.066cm) node[xshift=.3cm] {};
\fill[] (v5) circle (.066cm) node[xshift=.3cm] {};
\path node at (v1) [shape=rectangle,draw] {};
\path node at (v4) [shape=circle,draw] {};
   \draw (v1) -- (v2) ;
   \draw (v1) -- (v3) ;
   \draw (v2) -- (v3) ;
   \draw (v3) -- (v4) ;
   \draw (v3) -- (v5) ;
   \end{tikzpicture}}}
   +
    \raisebox{-.33cm}{\resizebox{1.3cm}{.8cm}{
\begin{tikzpicture}[scale=0.666]
  \coordinate  (v1) at (-1,1); 
   \coordinate  (v2) at (-1,-1);
   \coordinate (v3) at (0,0); 
   \coordinate  (v4) at (1,1); 
   \coordinate  (v5) at (1,-1);   
  \fill[] (v1) circle (.066cm) node[xshift=-.3cm] {};
   \fill[] (v2) circle (.066cm) node[xshift=-.3cm] {};
\fill[] (v3) circle (.066cm) node[yshift=.3cm] {};
\fill[] (v4) circle (.066cm) node[xshift=.3cm] {};
\fill[] (v5) circle (.066cm) node[xshift=.3cm] {};
\path node at (v2) [shape=rectangle,draw] {};
\path node at (v4) [shape=circle,draw] {};
   \draw (v1) -- (v2) ;
   \draw (v1) -- (v3) ;
   \draw (v2) -- (v3) ;
   \draw (v3) -- (v4) ;
   \draw (v3) -- (v5) ;
   \end{tikzpicture}}}
   -
    \raisebox{-.33cm}{\resizebox{1.3cm}{.8cm}{
\begin{tikzpicture}[scale=0.666]
  \coordinate  (v1) at (-1,1); 
   \coordinate  (v2) at (-1,-1);
   \coordinate (v3) at (0,0); 
   \coordinate  (v4) at (1,1); 
   \coordinate  (v5) at (1,-1);   
  \fill[] (v1) circle (.066cm) node[xshift=-.3cm] {};
   \fill[] (v2) circle (.066cm) node[xshift=-.3cm] {};
\fill[] (v3) circle (.066cm) node[yshift=.3cm] {};
\fill[] (v4) circle (.066cm) node[xshift=.3cm] {};
\fill[] (v5) circle (.066cm) node[xshift=.3cm] {}; 
\path node at (v4) [shape=circle,draw] {};
\path node at (v5) [shape=rectangle,draw] {};
   \draw (v1) -- (v2) ;
   \draw (v1) -- (v3) ;
   \draw (v2) -- (v3) ;
   \draw (v3) -- (v4) ;
   \draw (v3) -- (v5) ;
   \end{tikzpicture}}}
   , \quad 
   \mu (\Gamma,m) = 
   -
    \raisebox{-.33cm}{\resizebox{1.3cm}{.8cm}{
\begin{tikzpicture}[scale=0.666]
  \coordinate  (v1) at (-1,1); 
   \coordinate  (v2) at (-1,-1);
   \coordinate (v3) at (0,0); 
   \coordinate  (v4) at (1,1); 
   \coordinate  (v5) at (1,-1);   
  \fill[] (v1) circle (.066cm) node[xshift=-.3cm] {};
   \fill[] (v2) circle (.066cm) node[xshift=-.3cm] {};
\fill[] (v3) circle (.066cm) node[yshift=.3cm] {};
\fill[] (v4) circle (.066cm) node[xshift=.3cm] {};
\fill[] (v5) circle (.066cm) node[xshift=.3cm] {}; 
\path node at (v4) [shape=rectangle,draw] {};
   \draw (v1) -- (v2) ;
   \draw (v1) -- (v3) ;
   \draw (v2) -- (v3) ;
   \draw (v3) -- (v4) ;
   \draw (v3) -- (v5) ;
   \end{tikzpicture}}}
   ,
   \\
   \mu u (\Gamma,m) & = 
    \raisebox{-.33cm}{\resizebox{1.3cm}{.8cm}{
\begin{tikzpicture}[scale=0.666]
  \coordinate  (v1) at (-1,1); 
   \coordinate  (v2) at (-1,-1);
   \coordinate (v3) at (0,0); 
   \coordinate  (v4) at (1,1); 
   \coordinate  (v5) at (1,-1);   
  \fill[] (v1) circle (.066cm) node[xshift=-.3cm] {};
   \fill[] (v2) circle (.066cm) node[xshift=-.3cm] {};
\fill[] (v3) circle (.066cm) node[yshift=.3cm] {};
\fill[] (v4) circle (.066cm) node[xshift=.3cm] {};
\fill[] (v5) circle (.066cm) node[xshift=.3cm] {};
\path node at (v1) [shape=rectangle,draw] {};
\path node at (v4) [shape=rectangle,draw] {};
   \draw (v1) -- (v2) ;
   \draw (v1) -- (v3) ;
   \draw (v2) -- (v3) ;
   \draw (v3) -- (v4) ;
   \draw (v3) -- (v5) ;
   \end{tikzpicture}}}
   +
    \raisebox{-.33cm}{\resizebox{1.3cm}{.8cm}{
\begin{tikzpicture}[scale=0.666]
  \coordinate  (v1) at (-1,1); 
   \coordinate  (v2) at (-1,-1);
   \coordinate (v3) at (0,0); 
   \coordinate  (v4) at (1,1); 
   \coordinate  (v5) at (1,-1);   
  \fill[] (v1) circle (.066cm) node[xshift=-.3cm] {};
   \fill[] (v2) circle (.066cm) node[xshift=-.3cm] {};
\fill[] (v3) circle (.066cm) node[yshift=.3cm] {};
\fill[] (v4) circle (.066cm) node[xshift=.3cm] {};
\fill[] (v5) circle (.066cm) node[xshift=.3cm] {};
\path node at (v2) [shape=rectangle,draw] {};
\path node at (v4) [shape=rectangle,draw] {};
   \draw (v1) -- (v2) ;
   \draw (v1) -- (v3) ;
   \draw (v2) -- (v3) ;
   \draw (v3) -- (v4) ;
   \draw (v3) -- (v5) ;
   \end{tikzpicture}}}
   -
    \raisebox{-.33cm}{\resizebox{1.3cm}{.8cm}{
\begin{tikzpicture}[scale=0.666]
  \coordinate  (v1) at (-1,1); 
   \coordinate  (v2) at (-1,-1);
   \coordinate (v3) at (0,0); 
   \coordinate  (v4) at (1,1); 
   \coordinate  (v5) at (1,-1);   
  \fill[] (v1) circle (.066cm) node[xshift=-.3cm] {};
   \fill[] (v2) circle (.066cm) node[xshift=-.3cm] {};
\fill[] (v3) circle (.066cm) node[yshift=.3cm] {};
\fill[] (v4) circle (.066cm) node[xshift=.3cm] {};
\fill[] (v5) circle (.066cm) node[xshift=.3cm] {}; 
\path node at (v4) [shape=rectangle,draw] {};
\path node at (v5) [shape=rectangle,draw] {};
   \draw (v1) -- (v2) ;
   \draw (v1) -- (v3) ;
   \draw (v2) -- (v3) ;
   \draw (v3) -- (v4) ;
   \draw (v3) -- (v5) ;
   \end{tikzpicture}}}
   = - u \mu (\Gamma,m).
\end{align*}
\end{example}

Adapting the proof of Proposition \ref{proposition:Ddifferential} we conclude that $(\m V(\Gamma),U)$ is a cochain complex. The universality of this complex is established by 
\begin{thm}\label{theorem:universal}
 Given a graph $\Gamma$ and $P\subset \mathrm{Sub}(\Gamma)$ let $(\m P(\Gamma),D)$ be the associated complex constructed in Section \ref{subsection:markedcomplex}. 
 Define a graph $\Gamma'=(V',E')$ by 
 \begin{equation*}
  V':=P, \quad E':= \{ (p,p') \mid \text{$p$ and $p'$ share a common vertex} \}\subset V'\times V'.
 \end{equation*}
 
Then $(\m P(\Gamma),D) \cong (\m V(\Gamma'),U)$ as cochain complexes.
\end{thm}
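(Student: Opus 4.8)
The plan is to exhibit an explicit isomorphism $\phi\colon \m P(\Gamma) \to \m V(\Gamma')$ on generators and check that it intertwines $D$ with $U$; the whole statement is essentially a bookkeeping identity once one observes that the ``sharing a common vertex'' relation on $P$ is, by the very definition of $E'$, the adjacency relation of $\Gamma'$. First I would set up the identification: since $V' = P$ by definition, a marking $m\colon P \to \{0,1,2\}$ is literally the same datum as a marking $m\colon V' \to \{0,1,2\}$, and I transport the chosen total order on $P$ to $V'$. Before proceeding, note that $\Gamma'$ is a legitimate input for the construction of Section~\ref{subsection:markingvertices}: it has no external legs, the relation defining $E'$ is symmetric (sharing a vertex is symmetric), and discarding the diagonal pairs $(p,p)$ leaves $\Gamma'$ without self-loops. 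Define $\phi(\Gamma,m) := (\Gamma',m)$ on generators and extend $\mb Z$-linearly.

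The first thing to verify is that $\phi$ is a well-defined bijection of generating sets, hence a group isomorphism, and this reduces to matching the two admissibility conditions. An admissible $P$-marking is one in which no two marked elements of $P$ share a common vertex of $\Gamma$; by construction of $E'$, two elements $p,p' \in P = V'$ share a common vertex precisely when they are joined by an edge of $\Gamma'$. Hence the admissible $P$-markings of $\Gamma$ are exactly the markings of $V'$ in which no two marked vertices are adjacent, i.e.\ the admissible vertex-markings of $\Gamma'$. Since $\phi$ does not alter the underlying map $m$, it preserves the numbers $|P_1| = |V_1|$ and $|P_2| = |V_2|$, so it respects the bigrading and restricts to isomorphisms $\m P(\Gamma)_i^j \cong \m V(\Gamma')_i^j$.

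It remains to show $\phi$ is a cochain map, i.e.\ $\phi \circ D = U \circ \phi$, which I would prove by comparing $D = \delta + d$ with $U = \mu + u$ summand by summand. Under the identification $P = V'$ with matching orders and matching partitions $P_0 = V_0$, $P_1 = V_1$, $P_2 = V_2$, the operators $\delta^p$ and $\mu_v$ both act by resetting the label of a $1$-marked element from $1$ to $2$, and the sign prefactors in the definitions of $\delta$ and $\mu$ are syntactically identical; hence $\phi \circ \delta = \mu \circ \phi$. For the second summand, $d^p$ and $u_v$ both $2$-mark an unmarked element and both vanish under the same condition: $d^p$ is zero exactly when $p$ shares a vertex with a marked element of $P$, while $u_v$ is zero exactly when $v$ is adjacent to a marked vertex of $\Gamma'$, and these two conditions coincide by the defining property of $E'$. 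The sign conventions again agree verbatim, giving $\phi \circ d = u \circ \phi$. Adding the two identities yields $\phi \circ D = U \circ \phi$, so $\phi$ is the desired isomorphism of cochain complexes.

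There is no deep obstacle here: the theorem holds essentially by construction, and the only point demanding care is that the single combinatorial relation encoded in $E'$ --- adjacency in $\Gamma'$ equals sharing a vertex in $\Gamma$ --- is exactly what makes both the admissibility constraint and the vanishing condition for the $d$/$u$ differential transfer. The one technical nicety worth stating explicitly is the exclusion of the diagonal from $E'$, needed so that $\Gamma'$ is self-loop-free and hence a valid object for the vertex-marking construction.
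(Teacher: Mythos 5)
Your proof is correct and follows essentially the same route as the paper: the paper's map $\Psi(\Gamma,m):=(\Gamma',m')$ with $m'(v):=m(p)$ is exactly your $\phi$, and the paper likewise checks admissibility via the definition of $E'$ and verifies $\Psi\delta=\mu\Psi$ (with $\Psi d=u\Psi$ ``similar''), since the sign conventions transfer verbatim under the order-preserving identification $P=V'$. Your added remark about discarding diagonal pairs so that $\Gamma'$ is self-loop-free is a small point the paper leaves implicit, but it changes nothing in the argument.
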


\begin{proof}
 Let the order on $V'$ be induced by the one on $P$. We define a linear map $\Psi:\m P(\Gamma)^j_i\to \m V(\Gamma')^j_i$ by $\Psi(\Gamma,m):=(\Gamma',m')$ with $m'(v):=m(p)$.  The definition of $E'$ implies that $m'$ is admissible if and only if $m$ is.  Then
 \begin{align*}
\Psi \delta (\Gamma,m)& =  (-1)^{| P_m |} \sum_{p \in P_1 }   (-1)^{|\{ p' \in P_1  \mid p' > p\}|}\Psi (\Gamma,m_{p\mapsto2})  \\
&= (-1)^{|P_m|} \sum_{v \in P_1 } (-1)^{|\{ p' \in P_1  \mid p' > p\}|} (\Gamma',m'_{v\mapsto 2}) \\
& =(-1)^{|V_m|}  \sum_{v \in V_1 } (-1)^{|\{ v' \in V_1  \mid v' > v\}|}(\Gamma',m'_{v\mapsto 2}) = \mu \Psi (\Gamma,m)
 \end{align*}
 and similar for $\Psi d=u \Psi$. Furthermore, $\Psi$ is invertible; its inverse is given by sending a marking $m'$ of $\Gamma'$ to the marking $m:P \to \{0,1,2\}, p \mapsto m'(v)$ where $v\in V'$ is the unique vertex representing $p$. 
\end{proof}

\begin{example}\label{example:lineandcyclegraphs}
 For the marked graphs in examples \ref{example:edgemarking} and \ref{example:cyclemarking} the associated graphs $(\Gamma',m')$, with the notation of Example \ref{example:vertexmarking}, are
\raisebox{-.2cm}{\resizebox{1.2cm}{.7cm}{
\begin{tikzpicture}[scale=0.666]
  \coordinate  (v1) at (-1,1); 
   \coordinate  (v2) at (-1,-1);
   \coordinate (v3) at (0,0); 
   \coordinate  (v4) at (1,1); 
   \coordinate  (v5) at (1,-1);
  \fill[] (v1) circle (.066cm) node[xshift=-.3cm] {1};
   \fill[] (v2) circle (.066cm) node[xshift=-.3cm] {2};
\fill[] (v3) circle (.066cm) node[yshift=.3cm] {3};
\fill[] (v4) circle (.066cm) node[xshift=.3cm] {4};
\fill[] (v5) circle (.066cm) node[xshift=.3cm] {5};
\path node at (v1) [shape=circle,draw] {};
   \draw (v1) -- (v2) ;
   \draw (v1) -- (v3) ;
   \draw (v2) -- (v3) ;
   \draw (v3) -- (v4) ;
   \draw (v3) -- (v5) ;
   \draw (v4) -- (v5) ;
   \end{tikzpicture}}}
and \raisebox{-.05cm}{\resizebox{1.2cm}{.3cm}{
\begin{tikzpicture}[scale=0.666]
  \coordinate  (v1) at (-1,0); 
   \coordinate  (v2) at (0,0);
  \fill[] (v1) circle (.066cm) node[xshift=-.3cm] {1};
   \fill[] (v2) circle (.066cm) node[xshift=.3cm] {2};
\path node at (v2) [shape=circle,draw] {};
   \end{tikzpicture}}}, respectively. 
   The complex $(\m V(\Gamma'),U)$ contains thus all information of $(\m P(\Gamma),D)$ while being as simple as possible.
\end{example}

\begin{remark}
We rephrase the content of this section in abstract terms.  Consider a category of complexes of marked graphs, formed by complexes of the form $(\m P(\Gamma),D)$ for $\Gamma$ a finite graph and cochain maps between them.
Then we have shown that vertex-markings form a subcategory which is equivalent to the larger category; the inclusion of this subcategory defines a functor which is fully-faithful by construction and the previous theorem shows that it is also essentially surjective, hence part of an equivalence.
\end{remark}

\section{Cohomology of complexes of marked graphs}\label{section:ggcohom}

In this section we show that both the edge- and cycle-marking complexes are acyclic. By Theorem \ref{theorem:universal} it suffices to consider the case of vertex-markings, i.e.\ the complex $(\m V,U)$ where $\m V:=\bigoplus_{\Gamma \in \mathrm{Gra}_{r,l}}\m V(\Gamma')$. Moreover, since taking homology commutes with taking direct sums, we focus on the complexes $(\m V(\Gamma), U)$ for an arbitrary graph $\Gamma$.

The proof is based on two steps. First we calculate the cohomology of the complex $(\m V(\Gamma),\mu)$, then we use the result in a spectral sequence to find the cohomology of the double complex $(\m V(\Gamma),u+\mu)$.

\subsection{The cohomology of the complex $(\m V,\mu)$}\label{subsection:hommu}
Fix a graph $\Gamma$ and let $(\Gamma,m) \in \m V(\Gamma)_i^j$, so $m$ specifies $i$ 1- and $j$ 2-marked vertices. Since the map $\mu$ changes 1-markings into 2-markings, the image $\mu(\Gamma,m)$ is an element of $\m V(\Gamma)_{i-1}^{j+1}$, given by the sum over $i$ copies of $\Gamma$ where a single 1-marked vertex has been replaced with a 2-marked vertex. 

We can model this situation in a rather simple way. Let $\Lambda_n$ denote the graph on $n$ disconnected vertices, ordered by $v_1< \ldots <v_n$, and define a chain complex\footnote{In this section we use homological conventions, i.e.\ we grade by the number of 1-marked elements. The connection to the cohomology of $(\m V(\Gamma),\mu)$ is established at the end.} $(\m L(\Lambda_n),\mu)$ by 
\begin{equation*}
\m L(\Lambda_n)_i:= \langle (\Lambda_n,m)\mid m:V(\Lambda_n) \to \{1,2\}, |m^{-1}(1)|=i \rangle. 
\end{equation*}
Note that here the markings are required to mark every vertex of $\Lambda_n$. This complex captures the action of $\mu$ on a single configuration of marked vertices, i.e.\ on markings $m, m'$ with $V_m=V_{m'}$ as subsets of $V$. There are however also other configurations of marked vertices which have to be taken into account. 

\begin{defn}
An \textit{independent set} of size $n$ in a graph $\Gamma$ is a subset of $V$ of size $n$ such that no two of its elements are adjacent. We write $I_n=I_n(\Gamma)\subset 2^{2^V}$ for the set of all independent sets with size $n$ in $\Gamma$.
\end{defn}

Rephrasing Definition \ref{definition:Vcomplex} we see that every marking $m$ on $\Gamma$ corresponds bijectively to an independent set $V_m\subset V$ with a choice of labeling or 2-partition $V_m=V_1 \sqcup V_2$ of its elements. Taking as many copies of $\m L(\Lambda_n)$ as there are independent sets of size $n$ in $\Gamma$ allows to model the action of $\mu$ on $\m V(\Gamma)$.

\begin{lem}
Given $\Gamma \in \mathrm{Gra}_{r,l}$ define a chain complex $(\m L,\mu)$ by 
\begin{equation}\label{eq:cochainmap}
 \m L_i:=\bigoplus_{j \in \mb N} \bigoplus_{m \in I_{i+j}(\Gamma) }\m L(\Lambda_{i+j})_i. 
\end{equation}
Then there is an isomorphism of chain complexes $\left( \m V(\Gamma),\mu \right) \cong \left( \m L,\mu \right)$.
\end{lem}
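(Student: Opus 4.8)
The plan is to construct the isomorphism explicitly by decomposing $\m V(\Gamma)$ according to the underlying independent set of marked vertices. Recall that a marking $m$ on $\Gamma$ is, by admissibility, the data of an independent set $V_m \subset V$ together with a $2$-partition $V_m = V_1 \sqcup V_2$. I would first observe that this gives a direct sum decomposition of the underlying group:
\begin{equation*}
 \m V(\Gamma) = \bigoplus_{n \in \mb N} \bigoplus_{W \in I_n(\Gamma)} \m V(\Gamma)_W,
\end{equation*}
where $\m V(\Gamma)_W$ is the subgroup generated by those markings with $V_m = W$. The key point is that the differential $\mu$ only turns $1$-markings into $2$-markings and never changes the set $V_m$ of marked vertices; hence each summand $\m V(\Gamma)_W$ is preserved by $\mu$, and the decomposition above is a decomposition of \emph{chain complexes}, not merely of graded groups.

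Next I would match each summand $\m V(\Gamma)_W$, for $W$ an independent set of size $n$, with a copy of $\m L(\Lambda_n)$. Fix such a $W = \{w_1 < \cdots < w_n\}$ with the induced order from $V$. Sending a marking $m$ with $V_m = W$ to the marking $\ol m$ on $\Lambda_n$ defined by $\ol m(v_k) := m(w_k)$ is clearly a bijection on generators that respects the grading by number of $1$-markings, so it induces a group isomorphism $\m V(\Gamma)_W \cong \m L(\Lambda_n)_\bullet$. Summing over all $W \in I_n(\Gamma)$ and all $n = i+j$ reproduces exactly the group $\m L$ of \eqref{eq:cochainmap}, so the total map is a graded group isomorphism $\m V(\Gamma) \cong \m L$.

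It remains to check that this bijection intertwines the two copies of $\mu$. On a fixed summand $\m V(\Gamma)_W$ the differential is $\mu(\Gamma,m) = (-1)^{|V_m|}\sum_{v \in V_1}(-1)^{|\{v' \in V_1 \mid v' > v\}|}\mu_v(\Gamma,m)$, and since $|V_m| = n$ is constant on the summand and the order on $W$ agrees with the order on $\Lambda_n$ under the identification $w_k \leftrightarrow v_k$, every sign and every term transports verbatim to the differential of $\m L(\Lambda_n)$. Thus the isomorphism is a chain map, proving $(\m V(\Gamma),\mu) \cong (\m L,\mu)$. I do not anticipate a genuine obstacle here: the only subtlety worth stating carefully is that admissibility forces $V_m$ to be independent (so that $I_{i+j}$ is the correct indexing set) and that $\mu$ fixes $V_m$ setwise (so that the sum over independent sets is an \emph{internal} direct sum of subcomplexes rather than just a filtration); once these two facts are in place the signs match automatically because $\mu$'s sign depends only on $|V_m|$ and the relative order of $1$-marked vertices, both of which are preserved.
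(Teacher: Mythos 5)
Your proof is correct and takes essentially the same route as the paper: both decompose $\m V(\Gamma)$ by the underlying independent set of marked vertices and identify each piece with a copy of $\m L(\Lambda_{i+j})$ via the unique order-preserving bijection onto the vertices of $\Lambda_{i+j}$. Your write-up is somewhat more explicit than the paper's (which simply asserts the map is a chain map), in particular in noting that $\mu$ preserves each summand and that the signs transport because they depend only on $|V_m|$ and the relative order of the $1$-marked vertices.
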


\begin{proof}
Recall that the vertices of $\Gamma$ are ordered. Thus, for each marking $m$ on $\Gamma$ with $|V_m|=i+j$ there is an induced order on $V_m(\Gamma)$ and a unique order preserving bijection $\varphi_m$ between $V_m(\Gamma)$ and the vertices of $\Lambda_{i+j}$ (which are all marked). For any $m$ in $I_{i+j}(\Gamma)$ we send $(\Gamma,m) \in \m V(\Gamma)_i^j$ to (a copy of) $(\Lambda_{i+j},m')$ where $m'=m \circ \varphi_m^{-1}$. This defines a chain map from $\m V(\Gamma)_i$ to $\m L_i$ which is bijective by construction.  
\end{proof}

\begin{lem}
 $H_\bullet(\m L(\Lambda_n),\mu)=0$ for all $n>0$.
\end{lem}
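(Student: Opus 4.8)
The plan is to recognize $(\m L(\Lambda_n),\mu)$, up to an innocuous sign twist, as the augmented simplicial chain complex of the standard $(n-1)$-simplex, and to deduce acyclicity from contractibility of the simplex.

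First I would note that since every vertex of $\Lambda_n$ is marked, a generator $(\Lambda_n,m)$ is completely determined by its set $V_1\subseteq\{v_1,\ldots,v_n\}$ of $1$-marked vertices (the $2$-marked ones forming the complement). This identifies the basis of $\m L(\Lambda_n)_i$ with the $i$-element subsets of $\{v_1,\ldots,v_n\}$, so that $\m L(\Lambda_n)_i\cong C_{i-1}(\Delta^{n-1})$ for $i\ge 1$, while $\m L(\Lambda_n)_0\cong\mb Z$ serves as the augmentation term. Under this dictionary $\mu$ deletes a vertex from $V_1$, exactly as the simplicial boundary deletes a vertex from a face.

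It remains to match the differentials. For $V_1=\{w_1<\cdots<w_i\}$ the coefficient of $(\Lambda_n,m_{|w_j\mapsto 2})$ in $\mu(\Lambda_n,m)$ is $(-1)^n(-1)^{i-j}$, since $|V_m|=n$ and precisely $i-j$ elements of $V_1$ exceed $w_j$; comparing with the simplicial sign $(-1)^{j-1}$ yields $\mu=(-1)^{n+i+1}\partial$ on $\m L(\Lambda_n)_i$. Rescaling each differential by the unit $(-1)^{n+i+1}$ changes neither kernels nor images, so $(\m L(\Lambda_n),\mu)$ has the same homology as the augmented chain complex of $\Delta^{n-1}$, namely $H_i(\m L(\Lambda_n),\mu)\cong\tilde H_{i-1}(\Delta^{n-1})$.

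Since $n>0$ the simplex $\Delta^{n-1}$ is nonempty and contractible, so its reduced homology vanishes in every degree and the complex is acyclic. The only delicate point is the sign bookkeeping in the identity $\mu=(-1)^{n+i+1}\partial$; a reader wishing to bypass it can instead verify directly that adjoining the least vertex $v_1$ to $V_1$ (and sending to zero any marking that already $1$-marks $v_1$) defines a contracting homotopy $h$ satisfying $\mu h+h\mu=\pm\,\mathrm{id}$, which gives the same conclusion.
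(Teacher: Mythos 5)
Your proof is correct and takes essentially the same route as the paper: both identify $(\m L(\Lambda_n),\mu)$ with the augmented simplicial chain complex of the contractible $(n-1)$-simplex up to a degree-wise unit rescaling of the differential, and your bookkeeping $\mu=(-1)^{n+i+1}\partial$ is if anything more precise than the paper's, which absorbs a degree-dependent sign into the degree-independent substitution $\partial\mapsto(-1)^n\partial$. One caveat on your optional closing remark: since the sign relating $\mu$ to $\partial$ alternates with the degree $i$, the plain cone homotopy $h$ (adjoining the least vertex $v_1$) does not satisfy $\mu h+h\mu=\pm\,\mathrm{id}$ with a single global sign; you would need the degree-wise rescaled homotopy $h'_i=(-1)^{n+i}h_i$, which does give $\mu h'+h'\mu=\mathrm{id}$.
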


\begin{proof}
Fix $n>0$. To prove the lemma, we define a chain isomorphism $\Phi$ between $(\m L(\Lambda_n),\mu)$ and the augmented (and degree shifted) simplicial chain complex $\ti C(\Delta)[+1]$ of the standard $(n-1)$-simplex $\Delta=[v_1,\ldots,v_n]$ which is contractible, hence has vanishing reduced homology.
 
For $\ti C_i:=C_{i-1}(\Delta), \ti C_0=\mb Z$ and $\partial_0=\varepsilon: \sum \lambda_iv_i \mapsto \sum \lambda_i$ define 
 \begin{equation*}
\Phi_i: \m L(\Lambda_n)_i \to \ti C_i, \quad  \Phi_i(\Lambda_n,m):= \begin{cases}
                         \left[\{ v_k \mid v_k \text{ is 1-marked} \}\right] & i>0 \\
                         1 & i=0.
                        \end{cases}
 \end{equation*}
 
 Let $m_v$ mark a single vertex by 1, then we compute 
 \begin{equation*}
  \varepsilon \Phi_1(\Lambda_n,m_v)= \varepsilon[v]=1= \Phi_0 \mu (\Lambda_n,m\equiv 2)
 \end{equation*}
 where $m\equiv 2$ denotes the constant map, marking every vertex by 2.
 
For $l>0$ and $v_{k_1}< \ldots < v_{k_l}$ the 1-marked vertices of $\Lambda_n$
\begin{equation*}
 \partial \Phi_l(\Lambda_n,m)= \sum_{i=1}^{l} (-1)^{i+1}[v_{k_1}, \ldots,\overset{\wedge}v_{k_i}, \ldots, v_{k_l}]
\end{equation*}
and
 \begin{align*}
 \Phi_{l-1}\mu(\Lambda_n,m) &=   (-1)^n  \sum_{i=1}^l (-1)^{l-i} \Phi (\Lambda_n,m_{|e_i\mapsto 2}) \\
 & = (-1)^n  \sum_{i=1}^l (-1)^{l-i} [v_{k_1}, \ldots,\overset{\wedge}v_{k_i}, \ldots, v_{k_l}].
\end{align*}
The expressions match up to a sign $(-1)^{n-l+1}$ which may be absorbed by $\partial\mapsto (-1)^n \partial$ without changing the homology of this complex. The map $\Phi$ is clearly bijective, so it induces an isomorphism on homology.
\end{proof}

There is one summand on the right side of \eqref{eq:cochainmap} which has non-trivial homology, the piece corresponding to the empty graph $\Lambda_0$ representing the case where $\Gamma$ has no marked vertices at all. This element is $\mu$-closed, but not exact. 
\begin{lem}
 $H_k(\m L(\Lambda_0),\mu)=0$ for all $k>0$ and isomorphic to $\mb Z$ in degree 0.
\end{lem}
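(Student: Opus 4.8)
The plan is to unwind the definitions, as the statement is essentially a bookkeeping observation about the degenerate graph $\Lambda_0$. Recall that $\Lambda_n$ is the graph on $n$ disconnected vertices, so $\Lambda_0$ is the empty graph, with $V(\Lambda_0)=\emptyset$. A marking $m:V(\Lambda_0)\to\{1,2\}$ is therefore the unique empty function, and for it one has $|m^{-1}(1)|=0$. First I would record the immediate consequence: the generating set of $\m L(\Lambda_0)_i$ is empty for every $i>0$, while $\m L(\Lambda_0)_0$ is freely generated by the single empty marking, so
\begin{equation*}
 \m L(\Lambda_0)_0 \cong \mb Z, \qquad \m L(\Lambda_0)_k = 0 \ \text{ for all } k>0.
\end{equation*}

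Next I would observe that the differential is forced to vanish: $\mu$ acts by turning a $1$-marked vertex into a $2$-marked one, and since $\Lambda_0$ has no vertices at all, the defining sum in the formula for $\mu$ is empty. Equivalently, in the only nonzero degree the target $\m L(\Lambda_0)_{-1}$ is zero, and the source $\m L(\Lambda_0)_{1}$ is already zero. Hence $(\m L(\Lambda_0),\mu)$ is the complex with $\mb Z$ concentrated in degree $0$ and zero maps throughout.

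Computing homology is then immediate: in positive degrees the chain groups themselves vanish, giving $H_k=0$ for $k>0$, while in degree $0$ we get $H_0 = \ker\bigl(\mu\colon \m L(\Lambda_0)_0 \to \m L(\Lambda_0)_{-1}\bigr)\big/\operatorname{im}\bigl(\mu\colon \m L(\Lambda_0)_1 \to \m L(\Lambda_0)_0\bigr) = \mb Z/0 \cong \mb Z$. There is no genuine obstacle here; the only point requiring care is the interpretation of the conventions, namely that the empty marking on the empty graph is a legitimate (and in fact the sole) generator, so that this summand contributes the single surviving homology class, consistent with the remark preceding the lemma that this class is $\mu$-closed but not exact.
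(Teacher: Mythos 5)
Your proof is correct and is essentially the paper's own argument, just spelled out in more detail: the paper likewise observes that $\m L(\Lambda_0)$ is generated by the single empty marking concentrated in degree $0$ and that $\mu$ acts as zero. Nothing further is needed.
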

\begin{proof}
 $\m L(\Lambda_0)$ consists of a single element, the empty graph, concentrated in degree 0, and $\mu$ maps it to 0.
\end{proof}

Putting everything together we arrive at
\begin{prop}\label{proposition:homsigma}
 The homology of the complex $(\m V(\Gamma),\mu)$ is given by
\begin{equation*}
 H_k(\m V(\Gamma),\mu)  \cong \begin{cases}
                     \mb Z & k=0\\
                      0 & \text{else.}
                     \end{cases}
\end{equation*}
\end{prop}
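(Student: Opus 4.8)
The plan is to assemble the three preceding lemmas, since all the substantive work has already been carried out in them and what remains is bookkeeping. First I would invoke the isomorphism of chain complexes $(\m V(\Gamma),\mu)\cong(\m L,\mu)$ established above, which reduces the statement to computing $H_\bullet(\m L,\mu)$. The point of that reduction is that the action of $\mu$ on $\m V(\Gamma)$ preserves the total number $i+j$ of marked vertices (it only converts a $1$-mark into a $2$-mark), so the complex splits over the possible \emph{sets} of marked vertices, each of which is an independent set of $\Gamma$.

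Next I would exploit that $\m L$ is by construction a direct sum. Regrouping the summands in \eqref{eq:cochainmap} by independent set rather than by homological degree, one obtains
\[
 (\m L,\mu)\;\cong\;\bigoplus_{W}(\m L(\Lambda_{|W|}),\mu),
\]
where $W$ ranges over all independent sets of $\Gamma$, including the empty one. Since homology commutes with (finite) direct sums, this gives
\[
 H_\bullet(\m L,\mu)\;\cong\;\bigoplus_{W}H_\bullet(\m L(\Lambda_{|W|}),\mu).
\]
Now I would feed in the two computational lemmas: every nonempty independent set $W$ has $|W|=n>0$, so its summand $H_\bullet(\m L(\Lambda_n),\mu)$ vanishes, while the unique empty independent set contributes $H_\bullet(\m L(\Lambda_0),\mu)$, which is $\mb Z$ concentrated in degree $0$. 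Hence only the empty-set term survives, and $H_k(\m L,\mu)\cong\mb Z$ for $k=0$ and $0$ otherwise. Transporting this back along the chain isomorphism yields the claimed homology of $(\m V(\Gamma),\mu)$, the generator in degree $0$ being the totally unmarked graph $(\Gamma,m\equiv 0)$, corresponding to $W=\emptyset$.

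There is no genuine obstacle here; the difficulty resides entirely in the simplicial-contractibility argument already used for $\m L(\Lambda_n)$. The only points demanding a little care are, first, verifying that regrouping the terms of \eqref{eq:cochainmap} by independent set genuinely reassembles complete copies of $\m L(\Lambda_n)$ across all homological degrees $0\le i\le n$ (rather than isolated graded pieces), and second, translating the homological grading by the number of $1$-marked vertices back into the cohomological grading used elsewhere. Because $\Gamma$ has finitely many vertices this last step is a harmless reindexing and does not affect the stated answer.
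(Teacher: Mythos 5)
Your proposal is correct and follows essentially the same route as the paper: the paper likewise obtains Proposition \ref{proposition:homsigma} by combining the chain isomorphism $(\m V(\Gamma),\mu)\cong(\m L,\mu)$ with the two lemmas computing $H_\bullet(\m L(\Lambda_n),\mu)$ for $n>0$ and $n=0$, the only surviving summand being the one for the empty independent set. Your explicit regrouping of \eqref{eq:cochainmap} by independent set $W$ into full copies of $\m L(\Lambda_{|W|})$ is exactly the "putting everything together" step the paper leaves implicit, and your identification of the degree-$0$ generator as the unmarked graph matches the paper's remark following the proposition.
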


Finally, since $\mu$ is of bidgree $(-1,+1)$ we have
\begin{equation*}
 H_k\Big(\bigoplus_{j \in \mb N} \m V(\Gamma)_\bullet^{j-\bullet},\mu \Big) =  \bigoplus_{j \in \mb N} H_k \big(\m V(\Gamma)_\bullet^{j-\bullet},\mu \big) = \bigoplus_{j \in \mb N}    H^{j-k}\big( \m V(\Gamma)^\bullet_k,\mu \big),
\end{equation*}
so that 
\begin{equation*}
 H^n(\m V(\Gamma),\mu)  \cong \begin{cases}
                     \mb Z & n=0,\\
                      0 & \text{else,}
                     \end{cases}
\end{equation*}
and the sole class in $H^0(\m V(\Gamma),\mu)$ is represented by $(\Gamma,m_0)\in \m V(\Gamma)_0^0$, the graph $\Gamma$ with no marked vertices.

\subsection{The cohomology of the complex $(\m V,u+\mu)$}\label{subsection:homuplusmu}

Having understood the cohomology of $\m V$ with respect to $\mu$, we now consider the full differential $U=u+\mu$. Again, it suffices to study each summand $\m V(\Gamma)$ individually. The bigrading is given by the number of 1- and 2-marked vertices so that $u$ and $\mu$ have bidegrees $(0,1)$ and $(-1,1)$. 
In the following it will be convenient to change this bigrading into the total number of marked vertices and those of type 1. From now on we work with cohomological grading, i.e.\ we take the second part of the bigrading to be negative.

Hence, given a graph $\Gamma$ we define
\begin{equation*}
\m T \!:= \! \m T(\Gamma), 
\m T^{i,j} \!  := \big\langle  (\Gamma,m) \mid m: V \to \{0,1,2\}, |V_m|=i,|V_1|=-j \big\rangle = \m V(\Gamma)^{i+j}_{-j}.
\end{equation*}
The differentials $u$ and $\mu$ are then of bidegree $(1,0)$ and $(0,1)$, respectively. The associated total complex is $(\m T,u+\mu)$ where $\m T^n= \bigoplus_{i+j=n} \m T^{i,j}$, so $n$ is the number of 2-marked vertices.

\begin{thm}\label{theorem:HomDC}
For any graph $\Gamma$ the double complex $(\m T,U)$ is acyclic, i.e.\ the cohomology $H^n(\m T,U)$ vanishes for all $n>0$ and is isomorphic to $\mb Z$ in degree 0.
\end{thm}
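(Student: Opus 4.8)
The plan is to compute $H^\bullet(\m T,U)$ through the spectral sequence of the double complex, choosing the filtration so that the $E_0$-differential is the simple map $\mu$, whose cohomology is already under control by Proposition \ref{proposition:homsigma}. Since $\Gamma$ is finite there are only finitely many markings, so $\m T$ is concentrated in finitely many bidegrees; the double complex is bounded and the spectral sequence converges to $H^\bullet(\m T,U)$ with no convergence issues to worry about.

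Concretely, I would filter by the column index $i=|V_m|$, the total number of marked vertices. Because $\mu$ has bidegree $(0,1)$ it preserves $i$, whereas $u$ has bidegree $(1,0)$ and raises it; hence the associated-graded differential is $d_0=\mu$ and $E_0^{i,j}=\m T^{i,j}$. The first page is therefore $E_1^{i,j}=H^j(\m T^{i,\bullet},\mu)$, the $\mu$-cohomology computed inside each fixed column.

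The key step is to identify this $E_1$-page. Fixing $i$, a marking with $|V_m|=i$ amounts to an independent set $W\subset V$ of size $i$ together with a $2$-coloring $W=V_1\sqcup V_2$, and $\mu$ acts on each such $W$ independently. Exactly as in the identification of $(\m V(\Gamma),\mu)$ with a sum of the complexes $\m L(\Lambda_{|W|})$, the column $\m T^{i,\bullet}$ splits as a direct sum, over independent sets $W$ of size $i$, of copies of $(\m L(\Lambda_i),\mu)$. For $i>0$ each summand is acyclic, so $E_1^{i,\bullet}=0$; for $i=0$ the only marking is the unmarked graph, contributing $\mb Z$ in bidegree $(0,0)$. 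Thus $E_1^{i,j}=\mb Z$ for $(i,j)=(0,0)$ and vanishes otherwise.

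Since $E_1$ is concentrated in a single bidegree, all higher differentials vanish automatically and $E_1=E_\infty$. Reading off the total degree $n=i+j=|V_2|$, the surviving class lies in $n=0$ and is represented by the unmarked graph $(\Gamma,m_0)$, giving $H^0(\m T,U)\cong\mb Z$ and $H^n(\m T,U)=0$ for $n>0$. The feature I would emphasize is that this collapse requires no information at all about the cohomology of $u$: the entire content is the acyclicity of the simplex-like complexes $\m L(\Lambda_i)$ established earlier. Accordingly the main obstacle here is very mild and essentially bookkeeping --- keeping track of the regrading $\m T^{i,j}=\m V(\Gamma)^{i+j}_{-j}$ so that the homological acyclicity statement lines up correctly with the cohomological bidegrees of $\m T$, and confirming that $E_1$ is genuinely supported at the single spot $(0,0)$.
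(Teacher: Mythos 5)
Your proposal is correct and follows essentially the same route as the paper: filter $\m T$ by the number of marked vertices so that $d_0=\mu$, identify the $E_1$-page as $\mb Z$ concentrated in bidegree $(0,0)$, and conclude that the sequence collapses there. The only cosmetic difference is that you justify the vanishing of the columns $E_1^{i,\bullet}$ for $i>0$ directly from the splitting over independent sets into copies of $(\m L(\Lambda_i),\mu)$, whereas the paper cites Proposition \ref{proposition:homsigma}; this is the same input, since that proposition is proved by exactly this splitting, and your column-wise phrasing is, if anything, the slightly more precise way to invoke it.
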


\begin{proof}
 It is possible to give a constructive proof making the isomorphism explicit, but we opt for a spectral sequence argument. Our notation follows the conventions in \cite{gm-homalg}.
 
 Let $\m T$ be filtered by $\m T=F^0 \m T\supseteq \ldots \supseteq F^p \m T\supseteq \ldots \supseteq F^m\m T= 0$ with
 \begin{equation*}
  F^p\m T^n:=\bigoplus_{i+j=n, i\geq p } \m T^{i,j}.
 \end{equation*}
The associated spectral sequence starts with
\begin{align*}
 &E_0^{p,q}= F^{p} \m T^{p+q} / F^{p+1} \m T^{p+q} = \m T^{p,q}, \\
 &d_0^{p,q}:E_0^{p,q}\longrightarrow E_0^{p,q+1}= \mu:\m T^{p,q}\longrightarrow \m T^{p,q+1}.
\end{align*}
On its first page we have $E_1^{p,q}=H^q(\m T^{p,\bullet},\mu)$ and $d_1^{p,q}$ induced by $u$. But according to Proposition \ref{proposition:homsigma} the only non-zero term is $E_1^{0,0}=H^0(\m T^{0,\bullet},\mu)$.
All the maps $d_1^{p,q}$ are thus zero, so that the sequence collapses at its first page and we have
\begin{equation*}
 E_\infty^{p,q}= G^pH^{p+q}(\m T,u+\mu)=\begin{cases}
                                          H^0(\m T^{0,\bullet},\mu) & p,q=0 \\
                                          0 & \text{else.}
                                         \end{cases}
\end{equation*}

Thus, $H^0(\m T,U)\cong \mb Z$ and all other cohomology groups of $(\m T, U)$ are trivial.
\end{proof}

\begin{remark}
As mentioned in the introduction, the cohomology with respect to the differential $u$ is highly non-trivial (cf.\ \cite{knispel}). In principle, it could be studied by the methods of \cite{kwc-dogc}: Set up a spectral sequence for $(\m T,u+\mu)$ as above, but filtered ``in the other direction", i.e.\ with $E_1^{p,q}=H^p(\m T^{\bullet,q},u)$. Since the total complex $H^\bullet(\m T,u+\mu)$ is acyclic, classes in $H^\bullet(\m T,u)$ must have ``partners" which they kill or get killed by on some later page of the spectral sequence.
\end{remark}

\subsection{The cohomology of the edge- and cycle-marking complexes}\label{subsection:cohomgagcc}
In terms of the edge- and cycle-marking complexes Theorem \ref{theorem:HomDC} translates into
\begin{equation*}
 H^k(\m E, S)= \bigoplus_{\Gamma \in \mathrm{Gra}_{r,l}}  H^k(\m E(\Gamma),S)\cong \bigoplus_{\Gamma \in \mathrm{Gra}_{r,l}} H^k(\m V(\Gamma'),U) \cong\begin{cases}
                                          \bigoplus_{\Gamma}\mb Z & k=0 \\
                                          0 & \text{else}                                                                                                                                                                                                                  \end{cases}
\end{equation*}
and similar for $(\m C,T)$.

It is possible to write down explicit maximal generators for these cohomology groups.
\begin{prop}\label{prop:cocycles}
Define two linear maps $\chi_+:\m E \to \m E$ and $\delta_+:\m C \to \m C$ by 
\begin{align*} 
\chi_+ (\Gamma,m) &:= \sum_{e \in E} \chi_+^e (\Gamma,m), \\
 \chi_+^e(\Gamma,m) &:= \begin{cases} 
0 & \text{if $e$ is adjacent to another marked edge} \\
(\Gamma,m_{e \mapsto 1}) & \text{else,} 
                        \end{cases}
\end{align*}
and
\begin{align*}
\delta_+ (\Gamma,m) & := \sum_{c \in C} \delta_+^c (\Gamma,m), \\
\delta_+^c(\Gamma,m) & := \begin{cases} 
0 & \text{if $c$ is adjacent to a marked edge} \\
(\Gamma,m_{c \mapsto 1}) & \text{else.} 
                        \end{cases}
\end{align*}
Denote by $m_0: E\to \{0\}$ the trivial marking. Then $Se^{\chi_+} (\Gamma,m_0)=0$ and $Te^{\delta_+} (\Gamma,m_0)=0$.
\end{prop}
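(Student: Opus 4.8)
The plan is to evaluate $e^{\chi_+}(\Gamma,m_0)$ explicitly and then to read off the vanishing of $S=\sigma+s$ as a term-by-term cancellation between its two pieces; the cycle case will then follow verbatim.

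First I would compute the cocycle itself. Because $\Gamma$ is finite, the matchings of $\Gamma$ (sets of pairwise non-adjacent edges) have bounded size, so $e^{\chi_+}=\sum_k \tfrac1{k!}\chi_+^k$ is a finite sum. The summand $\chi_+$ adjoins one new $1$-marked edge that is non-adjacent to all currently marked edges, so $\chi_+^k(\Gamma,m_0)$ is the sum, over ordered $k$-tuples of distinct pairwise non-adjacent edges, of the graph with those edges $1$-marked. A matching $M$ of size $k$ arises from exactly its $k!$ orderings, each with coefficient $+1$; dividing by $k!$ yields
\[
 e^{\chi_+}(\Gamma,m_0)=\sum_{M}(\Gamma,m_M),
\]
the sum over all matchings $M$ (including $M=\emptyset$), where $m_M$ $1$-marks precisely the edges of $M$. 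The analogous computation gives $e^{\delta_+}(\Gamma,m_0)=\sum_{C}(\Gamma,m_C)$, summed over all families $C$ of pairwise vertex-disjoint cycles.

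Next I would organize $S\,e^{\chi_+}(\Gamma,m_0)$ by its image terms. Since each of $\sigma$ and $s$ raises the number of $2$-marked edges by one, the image is supported on markings $m'$ with a single $2$-marked edge $f$ and a set $A$ of $1$-marked edges such that $A\sqcup\{f\}$ is admissible. Each such $(\Gamma,m')$ is hit exactly twice: once by $\sigma$ acting on the term $m_{A\sqcup\{f\}}$ (turning the $1$-mark on $f$ into a $2$-mark), and once by $s$ acting on the term $m_A$ (creating a $2$-mark on the unmarked, non-adjacent edge $f$). Both source terms occur with coefficient $+1$ in $e^{\chi_+}(\Gamma,m_0)$, so everything reduces to comparing their two signs.

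The step I expect to require the most care is this sign check. Setting $k:=|A\sqcup\{f\}|$, $l:=|\{e\in A\mid e<f\}|$ and $r:=|\{e\in A\mid e>f\}|=(k-1)-l$, Definition \ref{definition:d1d2} assigns the $\sigma$-contribution the sign $(-1)^{k}(-1)^{r}$ and the $s$-contribution the sign $(-1)^{l}$; since $(-1)^{k+r}=(-1)^{2k-1-l}=-(-1)^{l}$, the two cancel, so every target term has total coefficient zero and $S\,e^{\chi_+}(\Gamma,m_0)=0$. The identity $T\,e^{\delta_+}(\Gamma,m_0)=0$ follows by the same argument with $(\sigma,s)$ replaced by $(\tau,t)$ and adjacency of edges by cycles sharing a vertex. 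Alternatively, both identities collapse to the single statement $U\,e^{\nu_+}(\Gamma',m_0)=0$ on the universal vertex model of Theorem \ref{theorem:universal}, since the isomorphism $\Psi$ constructed there intertwines $S$ and $T$ with $U$ and carries $\chi_+,\delta_+$ to the vertex creation operator $\nu_+$; one then performs the sign computation only once, for vertices.
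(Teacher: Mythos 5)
Your proof is correct: under the (only sensible) reading that $\chi_+$ creates $1$-markings on unmarked edges—required anyway for $e^{\chi_+}$ to be locally nilpotent and thus well-defined over $\mb Z$—the element $e^{\chi_+}(\Gamma,m_0)$ is indeed the sum over all matchings with coefficient $+1$, each admissible marking with $1$-marked set $A$ and single $2$-marked edge $f$ is hit exactly once by $\sigma$ (from $m_{A\sqcup\{f\}}$) and once by $s$ (from $m_A$), and your sign check $(-1)^{k+r}=-(-1)^{l}$ is right, as is the verbatim transfer to the cycle case. This is essentially the paper's own approach: the paper gives no argument itself but defers to propositions 4.29 and 4.35 of \cite{ksvs}, where exactly this term-by-term cancellation between the $\sigma$- and $s$-images (respectively $\tau$ and $t$) is carried out, so your write-up supplies that ``straightforward calculation'' in self-contained form.
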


\begin{proof}
 A straightforward calculation, see propositions 4.29 and 4.35 in \cite{ksvs}.
\end{proof}

From a physics point of view, as advocated in \cite{ksvs}, this establishes the expected result; the cohomology of $(\m E,S)$ or $(\m C,T)$ is generated by the sum over all graphs marked accordingly which resembles the pure gluon or gluon/ghost amplitudes\footnote{Note that ``real'' ghost cycles are directed. This can be included without changing any of the results by declaring a marked cycle to represent the sum of two ghost cycles with opposite orientation.}.

\section{The total complex}\label{section:gaugegraphcomplex}
We can combine the edge- and cycle-marking graph complexes into a single complex by considering admissible markings $m:E\sqcup C \to \{0,1,2\}$. 

Define the free abelian groups
 \begin{equation*}
  \m G^{i,j}:= \left\langle (\Gamma,m) \mid \Gamma \in \mathrm{Gra}_{r,l}, m:E\sqcup C \to \{0,1,2\}:|E_2|=i, |C_2|=j\right\rangle.
 \end{equation*}
These groups populate a double complex with horizontal differential $S$ and vertical differential $T$. The next lemma allows us to form the associated total complex,
\begin{equation*}
 \m G^n:= \bigoplus_{i+j=n}\m G^{i,j}, \quad S+(-1)^nT:\m G^n \longrightarrow \m G^{n+1}.
\end{equation*}

 \begin{lem}
  The maps $S$ and $T$ commute.
 \end{lem}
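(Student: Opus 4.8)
The plan is to exploit the fact that $S=s+\sigma$ and $T=t+\tau$ act on genuinely independent pieces of the data: $S$ alters only the edge-part of a marking while leaving every cycle-marking fixed, and $T$ alters only the cycle-part while leaving every edge-marking fixed. Concretely, for a combined marking $m:E\sqcup C\to\{0,1,2\}$ I would record the two restrictions $m|_E$ and $m|_C$ separately. I would first confirm that an admissible combined marking is exactly a pair consisting of an admissible edge-marking and an admissible cycle-marking, with no cross-condition coupling the two types: the condition in $s_e$ refers only to adjacent marked \emph{edges} and the condition in $t_c$ only to marked \emph{cycles} sharing a vertex, so edges and cycles are marked fully independently. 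Consequently $\m G(\Gamma)$ is, as a bigraded group, the tensor product $\m E(\Gamma)\otimes\m C(\Gamma)$, under which $S$ becomes $S\otimes\mathrm{id}$ and $T$ becomes $\mathrm{id}\otimes T$. Operators of these two forms always commute, since $(S\otimes\mathrm{id})(\mathrm{id}\otimes T)=S\otimes T=(\mathrm{id}\otimes T)(S\otimes\mathrm{id})$, and this gives $ST=TS$ at once.

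To justify the tensor description I would verify that the two defining ingredients of $S$, namely its applicability conditions and its signs, depend only on edge-data and are therefore untouched by $T$. The global and positional signs $(-1)^{|E_m|}$ and $(-1)^{|\{e'\in E_1\mid e'>e\}|}$ involve only the cardinalities of $E_1,E_2$ and the induced order on the marked edges, all of which $T$ preserves because it changes no edge-marking. Symmetrically, the condition in $t_c$ and the signs $(-1)^{|C_m|}$ and $(-1)^{|\{c'\in C_1\mid c'>c\}|}$ depend only on cycle-data, which $S$ leaves invariant. This is precisely the assertion that, in the product picture, $S$ is of the form $S\otimes\mathrm{id}$ and $T$ of the form $\mathrm{id}\otimes T$.

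For readers who prefer an explicit check, I would also indicate the alternative of expanding $ST(\Gamma,m)$ and $TS(\Gamma,m)$ directly as double sums, each summand indexed by one edge-move (a promotion $\sigma_e$ or a creation $s_e$) together with one cycle-move ($\tau_c$ or $t_c$). The summands are in evident bijection, and one checks that the accompanying factor, a product of an edge-sign and a cycle-sign, is identical in both orders of application because neither sign factor involves a quantity that the other move can change.

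The main obstacle I anticipate is purely the sign bookkeeping: one must be certain that promoting or creating a mark of one type never silently alters a cardinality or an order that controls the sign attached to a move of the other type. The cleanest safeguard is the observation above that every edge-sign is a function of the edge-marking alone and every cycle-sign a function of the cycle-marking alone, so that no Koszul-type cross sign can arise. This is exactly why the two differentials commute rather than anticommute, and hence why the total differential in the following display is taken to be $S+(-1)^{n}T$.
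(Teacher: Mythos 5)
Your argument rests on the claim that an admissible marking $m:E\sqcup C\to\{0,1,2\}$ is nothing but an independent pair of an admissible edge-marking and an admissible cycle-marking, so that $\m G(\Gamma)\cong\m E(\Gamma)\otimes\m C(\Gamma)$ with $S=S\otimes\mathrm{id}$ and $T=\mathrm{id}\otimes T$. That claim is false in the setup of Section \ref{section:gaugegraphcomplex}. The total complex is built from markings that are admissible in the sense of Definition \ref{definition:Smarking} applied to $P=E\sqcup C$: \emph{no two marked elements share a common vertex}, where ``elements'' now runs over edges \emph{and} cycles. So a marked edge may not share a vertex with a marked cycle. Consequently the generators of $\m G$ form a proper subgroup of $\m E(\Gamma)\otimes\m C(\Gamma)$ (only cross-admissible pairs occur), and the differentials do couple: in the combined complex $s_e(\Gamma,m)=0$ as soon as $e$ shares a vertex with a marked \emph{cycle}, and $t_c(\Gamma,m)=0$ as soon as $c$ shares a vertex with a marked \emph{edge}. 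Your assertion that ``the condition in $s_e$ refers only to adjacent marked edges and the condition in $t_c$ only to marked cycles'' is exactly the point where the proof breaks down, and the one-line commutation $(S\otimes\mathrm{id})(\mathrm{id}\otimes T)=(\mathrm{id}\otimes T)(S\otimes\mathrm{id})$ is not available.

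What is true, and what the paper actually proves, is that this coupling is symmetric in the two orders of application. Your sign analysis is correct: the sign attached to an edge-move depends only on $E_m$, the sign attached to a cycle-move only on $C_m$, and neither differential changes the other's marked set; this disposes of $\tau\sigma=\sigma\tau$, $\tau s=s\tau$ and $t\sigma=\sigma t$, so the entire content of the lemma sits in $[s,t]=0$. There, the nonvanishing summands of $ts(\Gamma,m)$ are indexed by pairs $(e,c)$ with $e\in E_0$ markable given $m$ and $c\in C_0$ disjoint from $e$ \emph{and} from every other marked element, while those of $st(\Gamma,m)$ are indexed by pairs where $c$ is markable given $m$ and $e$ is disjoint from $c$ and from every other marked element. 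The missing step is to observe that both index sets coincide, because each one is precisely the set of pairs $(e,c)$ that can be $2$-marked \emph{simultaneously} in $(\Gamma,m)$; order of application is then immaterial. Your fallback ``explicit check'' paragraph needs this argument too and inherits the gap, since its ``evident bijection'' of summands was justified by the same false independence claim. Once this symmetry of the vanishing conditions is established, your (correct) sign bookkeeping finishes the proof.
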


 \begin{proof}
  Since $\tau$ and $\sigma$ change the respective 1-markings only, we have $\tau \sigma =\sigma \tau, \tau s=s\tau$ and $t \sigma =\sigma t$ by definition. 
  
  It remains to show that $t$ and $s$ commute. Given a marked graph $(\Gamma,m)$ write $E'\subset E_0$ for the edges of $\Gamma$ that can be 2-marked and for each $e \in E'$ let $C'_e \subset C_0$ be the set of cycles disjoint from $e$ and any other marked edge or cycle. Then
   \begin{align*}
  ts (\Gamma,m) = & \sum_{c \in C_0 } \sum_{e \in E_0 }(-1)^{|\{ c' \in C_m \mid c' < c\}|+|\{ e' \in E_m \mid e' < e\}|} t_c s_e(\Gamma,m) \\
  = &\sum_{e \in E' } \sum_{c \in C'_e }(-1)^{|\{ c' \in C_m \mid c' < c\}|+|\{ e' \in E_m \mid e' < e\}}(\Gamma,m_{e\mapsto 2,c\mapsto 2}),
  \end{align*}
  all other summands vanish.
  The expression for $st (\Gamma,m)$ is similar,
   \begin{equation*}
   s t (\Gamma,m) = \sum_{c \in C'' } \sum_{e \in E_c'' }(-1)^{|\{ c' \in C_m \mid c' < c\}|+|\{ e' \in E_m \mid e' < e\}|} (\Gamma,m_{e\mapsto 2,c\mapsto 2})
  \end{equation*}
  with $C''\subset C_0$ and $E_c'' \subset E_0$ defined as above. Thus, $ts(\Gamma,m)$ and $st(\Gamma,m)$ are both given by sums over subsets of $E_0 \times C_0$,
  \begin{equation*}
   \{(e,c) \mid e \in E' \wedge c\in C_e'\} \text{ and } \{(e,c) \mid c \in C'' \wedge e\in E_c''\}.
  \end{equation*}
If a pair $(e_*,c_*)$ is in the left hand set, then $e_*$ and $c_*$ can both be 2-marked in $\Gamma$ at the same time, i.e.\ the order is unimportant. Therefore, the pair is also in the right hand set. Hence both sums are identical, so $[s,t]=0$.
 \end{proof}

\begin{thm}
 The cohomology of the total complex is given by
 \begin{equation*}
  H^n\big(\m G^\bullet, S+(-1)^\bullet T \big)\cong \begin{cases}
                    \bigoplus_{\Gamma \in \mathrm{Gra}_{r,l}} \mb Z & \text{if }n=0 \\
                    0 & \text{else.}
                   \end{cases}
 \end{equation*}
\end{thm}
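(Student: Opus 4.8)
The plan is to reduce the total complex to the two single-marking complexes already understood, using a spectral-sequence argument analogous to the one in Theorem \ref{theorem:HomDC}. The double complex $\m G^{\bullet,\bullet}$ has horizontal differential $S$ (acting on edge-markings) and vertical differential $T$ (acting on cycle-markings), and the preceding lemma guarantees $ST=TS$, so the total differential $S+(-1)^\bullet T$ is well-defined. First I would filter $\m G$ by the horizontal degree $i=|E_2|$, giving the standard two-step filtration of a first-quadrant double complex. The $E_0$ page is then the complex $(\m G^{i,\bullet},T)$ with fixed edge-marking data, and $d_0$ is the vertical differential $T$.

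The key step is to compute the $E_1$ page, i.e.\ the cohomology with respect to $T$. For a \emph{fixed} admissible edge-marking, the remaining freedom lies entirely in the cycle-markings on those cycles disjoint from the marked edges; on this sub-configuration $T$ acts exactly as the cycle-marking differential of Section \ref{subsection:gcgc}. By the cycle-marking analogue of Theorem \ref{theorem:HomDC} (the cycle-version of the computation via $(\m V(\Gamma'),U)$), this cohomology is $\mb Z$ concentrated in cycle-degree $0$, generated by the marked graph with no $2$-marked cycles. Hence $E_1^{i,j}$ vanishes except along $j=0$, where it reduces to the edge-marking complex $(\m E,S)$ taken over $\mathrm{Gra}_{r,l}$, with $d_1$ induced by $S$. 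On the $E_2$ page I then invoke the already-established acyclicity of $(\m E,S)$ (the edge-marking translation of Theorem \ref{theorem:HomDC}), which gives $\bigoplus_\Gamma \mb Z$ in total degree $0$ and zero elsewhere. The spectral sequence degenerates at $E_2$ for degree reasons (all higher differentials land in zero groups), yielding the claimed cohomology.

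The main obstacle I anticipate is bookkeeping rather than conceptual: one must verify that restricting $T$ to a fixed edge-marked graph genuinely reproduces the cycle-marking differential, including the compatibility of the admissibility condition (a cycle-marking is now admissible relative to the already-marked edges, via the sharing-a-vertex rule of Definition \ref{definition:Smarking}). This is precisely what the universal vertex-model of Theorem \ref{theorem:universal} handles cleanly: the combined structure $E\sqcup C$ with its incidence relation is itself a $P\subset\mathrm{Sub}(\Gamma)$, so $(\m G(\Gamma),S+T)$ is an instance of $(\m P(\Gamma),D)\cong(\m V(\Gamma'),U)$, and one could bypass the two-stage argument entirely by applying Theorem \ref{theorem:HomDC} directly to the associated graph $\Gamma'$ built from all edges and cycles. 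That direct route is in fact the cleanest: the cohomology of $(\m V(\Gamma'),U)$ is $\mb Z$ for each $\Gamma$, and summing over $\mathrm{Gra}_{r,l}$ gives $\bigoplus_\Gamma \mb Z$ in degree $0$. The only subtlety to check is that the total-complex sign convention $S+(-1)^\bullet T$ agrees with the sign built into $U$ on $\Gamma'$; this is a routine sign-matching that does not affect the cohomology, exactly as noted for the orderings in the remark following Definition \ref{definition:d1d2}.
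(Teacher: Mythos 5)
Your proposal is correct, and the ``direct route'' you settle on at the end is in fact the paper's own proof: the paper disposes of this theorem with the single line ``Same as the proof of Theorem \ref{theorem:HomDC}'', i.e.\ it treats $E\sqcup C$ as one marking set and reruns the universal-model/spectral-sequence argument on the combined complex, exactly as you suggest. Concerning the sign subtlety you flag there: it is genuinely routine. With the order on $P=E\sqcup C$ that places all edges before all cycles, rescaling generators by $\epsilon(\Gamma,m)=(-1)^{|E_1||C_2|+\binom{|C_2|}{2}}$ gives an isomorphism of complexes $\bigl(\m G(\Gamma),S+(-1)^\bullet T\bigr)\cong\bigl(\m P(\Gamma),D\bigr)$, after which theorems \ref{theorem:universal} and \ref{theorem:HomDC} finish the proof; alternatively, the filtration-by-total-number-of-marked-elements argument of Theorem \ref{theorem:HomDC} runs verbatim on $S+(-1)^\bullet T$, since Proposition \ref{proposition:homsigma} is insensitive to such sign changes. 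Your primary, two-stage route (filter by $|E_2|$, kill the cycle direction at $E_1$, then the edge direction at $E_2$) is a genuinely different and also valid argument; what it buys is an explicit picture of how the two single-marking acyclicity results combine, at the price of the $E_1$-page identification, and that is the one place where your write-up needs repair: for a fixed edge-marking $m_E$, the generator of $H^0$ of the cycle direction is \emph{not} ``the marked graph with no $2$-marked cycles'' (that element is not $T$-closed, since $t$ creates $2$-marked cycles from it); it is the sum $\omega_{m_E}$ of \emph{all} admissible $1$-cycle-markings compatible with $m_E$, as in Proposition \ref{prop:cocycles}. This matters because identifying $d_1$ with $S$ requires exactly these generators: one checks $\sigma_e\,\omega_{m_E}=\omega_{m_E|_{e\mapsto 2}}$ for $e\in E_1$, and $s_e\,\omega_{m_E}=\omega_{m_E|_{e\mapsto 2}}$ for markable $e\in E_0$ (the terms whose $1$-marked cycles meet $e$ drop out, and the survivors are precisely the $1$-cycle-markings compatible with the enlarged edge-marking), with signs matching those of $S$ on $\m E$ because the sign conventions in the total complex count edge- and cycle-markings separately. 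With that correction your $E_2$-collapse goes through, and both of your routes prove the theorem.
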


\begin{proof}
  Same as the proof of Theorem \ref{theorem:HomDC}.
\end{proof}

 Moreover, an explicit maximal generator is given by the following construction. Let $m_0:E\sqcup C \to \{0\}$ denote the trivial marking. If
 \begin{equation*}
 X_{r,l}:=\sum_{\Gamma \in \mathrm{Gra}_{r,l}} (\Gamma,m_0)
 \end{equation*}
 encodes the $l$-th order $r$-point amplitude for a scalar cubical field, then its general gauge theoretic counterpart\footnote{Including fermions into the picture is a mere technicality. The compatibility of symmetry factors however is not so simple. See \cite{ksvs}, Lemma 4.10 and 4.24, as well as \cite{kiss-diana} and \cite{sars}.} is
  \begin{equation*}
\ti X_{r,l}:=\sum_{\Gamma \in \mathrm{Gra}_{r,l}} e^{\delta_+}e^{\chi_+} (\Gamma,m_0),
 \end{equation*}
 the sum over all 1-marked graphs. Proposition \ref{prop:cocycles} shows that $S\ti X_{r,l}=0$ and $T\ti X_{r,l}=0$, so $\ti X_{r,l}$ is the sole maximal generator of $H^0(\m G^\bullet, S+(-1)^\bullet T )$. Putting everything together we have proven Theorem \ref{thm:mainthm}.

\bibliography{ref.bib}   

\begin{thebibliography}{KSvS13}

\bibitem[CV03]{convogt}
J.~Conant and K.~Vogtmann.
\newblock On a theorem of \uppercase{K}ontsevich.
\newblock {\em Algebr. Geom. Topol.}, 3(2), 2003.

\bibitem[Cvi04]{cvitanovic}
P.~Cvitanovic.
\newblock {\em Field Theory}.
\newblock Niels Bohr Institute (Copenhagen), 2004.
\newblock \url{http://chaosbook.org/FieldTheory/pdf.html}.

\bibitem[FO06]{Ofarrill}
J.~Figueroa-O'Farrill.
\newblock \uppercase{BRST} cohomology.
\newblock {\em Lecture notes}, 2006.
\newblock \url{https://empg.maths.ed.ac.uk/Activities/BRST/Notes.pdf}.

\bibitem[GM99]{gm-homalg}
S.I. Gelfand and Y.I. Manin.
\newblock {\em Homological Algebra}.
\newblock Encyclopaedia of Mathematical Sciences. Springer, Berlin Heidelberg,
  1999.

\bibitem[Jon08]{jonsson}
J.~Jonsson.
\newblock {\em Simplicial Complexes of Graphs}.
\newblock Lecture Notes in Mathematics. Springer, Berlin Heidelberg, 2008.

\bibitem[Ki{\ss}20]{kiss-diana}
H.~Ki{\ss}ler.
\newblock Diagrammatic analysis of the 't \uppercase{H}ooft-\uppercase{V}eltman
  gauge.
\newblock {\em in preparation}, 2020.

\bibitem[Kni17]{knispel}
A.~Knispel.
\newblock Combinatorial \uppercase{BRST} homology and graph differentials.
\newblock {\em Master's thesis}, 2017.
\newblock
  \url{http://www2.mathematik.hu-berlin.de/~kreimer/wp-content/uploads/Knispel.pdf}.

\bibitem[Kon94]{ko2}
M.~Kontsevich.
\newblock {\em Feynman diagrams and low-dimensional topology. In: First
  European Congress of Math. (Paris 1992), Vol. II}.
\newblock Progress in Mathematics, 120. Birkh{\"a}user Basel, 1994.

\bibitem[Kre18]{DK-corolla}
D.~Kreimer.
\newblock The corolla polynomial: a graph polynomial on half-edges.
\newblock {\em Proceedings of Science, Loops and Legs in Quantum Field Theory},
  2018.

\bibitem[KSvS13]{ksvs}
D.~Kreimer, M.~Sars, and W.~van Suijlekom.
\newblock Quantization of gauge fields, graph polynomials and graph cohomology.
\newblock {\em Annals Phys.}, 336, 2013.

\bibitem[KW{\v Z}14]{kwc-dogc}
Anton Khoroshkin, Thomas Willwacher, and Marko {\v Z}ivkovi{\'c}.
\newblock Differentials on graph complexes.
\newblock {\em arxiv:1411.2369}, 2014.

\bibitem[Sar15]{sars}
M.~Sars.
\newblock Parametric representation of \uppercase{F}eynman amplitudes in gauge
  theories.
\newblock {\em PhD thesis}, 2015.
\newblock
  \url{http://www2.mathematik.hu-berlin.de/~kreimer/wp-content/uploads/SarsThesisNeu.pdf}.

\end{thebibliography}
\bibliographystyle{alpha}  

\end{document}